\newcommand{\ExpectMeas}[2]{\mathbb{E}^{#1} \left[ #2 \right]}
\newcommand{\keywords}[1]{{\bf Keywords:} {#1}}
\newcommand{\subclass}[1]{{\bf Mathematics Subject Classification (2000):} {#1}}
\newcommand{\JEL}[1]{{\bf JEL:} {#1}}
\newcommand{\qed}{\hfill$\Box$}
\newtheorem{thm}{theorem}[section]
\newtheorem{theorem}[thm]{Theorem}
\newtheorem{proposition}[thm]{Proposition}
\newtheorem{corollary}[thm]{Corollary}
\newenvironment{proof}{{\bf Proof }}{}
\newcommand{\MakeTitle}{\maketitle\newcommand{\and}{$\cdot$ }}
\title{Maximum Entropy Distributions Inferred from Option Portfolios on an Asset
	\thanks{
		We would like to thank David Chevance, Peter J\"{a}ckel, Yannick
		Malevergne and Wolfgang Scherer for helpful comments and suggestions.
		We would also like to thank the organisers of the WBS 5th Fixed Income
		Conference in Budapest, where we had the opportunity to present some of
		our results.
	}
}
\author{
	Cassio Neri
	\thanks{Lloyds Banking Group, \texttt{cassio.neri@lloydsbanking.com}}
  \and
	Lorenz Schneider
	\thanks{Center for Financial Risks Analysis (CEFRA), EMLYON Business
	  School, \texttt{schneider@em-lyon.com}}
}
\date{13 November 2010}
\begin{document}

\MakeTitle

\begin{abstract}
We obtain the maximum entropy distribution for an asset from call and digital option prices.
A rigorous mathematical proof of its existence and exponential form is given, which can also
be applied to legitimise a formal derivation by Buchen and Kelly \cite{BK}.
We give a simple and robust algorithm for our method and compare our results to theirs.
We present numerical results which show that our approach implies very realistic
volatility surfaces even when calibrating only to at-the-money options.
Finally, we apply our approach to options on the S\&P 500 index.

\keywords{Entropy \and Information Theory \and $I$-Divergence \and Asset Distribution \and Option Pricing \and Volatility Smile}

\subclass{91B24 \and 91B28 \and 91B70 \and 94A17}

\JEL{C16 \and C63 \and G13}
\end{abstract}

\section{Introduction}
\label{Introduction}

The recent market turbulence caused by the credit crunch has exposed in a drastic way the consequences of overconfidence
in financial modelling assumptions. Typically, a financial model, such as the famous Black-Scholes model,
will assume that the price of an asset follows a given stochastic process whose parameters need to be calibrated to market prices.
If a model becomes an accepted standard and most market participants adopt it, problems can occur when assumptions that hold
under normal market conditions are also expected to hold under abnormal ones. An example is the stock market crash of
$1987$, where the volatilities used for pricing at-the-money options were also used for pricing far out-of-the-money put options.
As the market headed downwards, it turned out that the true hedging cost for somebody who had sold such puts was far greater than
the received premium.
Another good example is described in the recent paper \cite{CJS}, where the authors demonstrate for CDOs and CDO$^2$s what can
happen to asset prices when model parameters that are hard to observe or estimate with sufficient accuracy are put to a
true stress test. However, they write: ``The good news is that this mistake can be fixed. For example, a Bayesian approach that explicitly
acknowledges that parameters are uncertain would go a long way towards solving this problem.'' \cite{CJS}

Another well-established way to obtain estimates for such parameters
from observable data, which we will follow here, is via maximum
entropy methods (\cite{AFHS}, \cite{BCM}, \cite{BBM}, \cite{BBCM},
\cite{BBC}, \cite{BK}, \cite{DMY}, \cite{F}, \cite{G1}, \cite{G2}).
Such an estimate ``is the least biased estimate possible on the given information,
i.e., it is maximally noncommittal with regard to missing information.'' \cite{Jay}
For example, the probability distribution over the interval $[0,1]$ which maximises entropy is the uniform distribution.
There is no entropy maximiser for distributions over $\mathbb{R}$.
However, when the mean and variance are specified, the Gauss distribution with these parameters maximises entropy.

We concentrate on the distribution of an asset price at a given time in the future, for which there are some option data.
We develop a highly robust technique to find a Maximum Entropy Distribution (MED) for the asset in case we have call and digital
option prices. The density is obtained by partitioning the range of possible stock prices into buckets,
i.e. the intervals between adjacent strikes given by the option data,
but, in contrast to the Black-Scholes model, making no a priori assumption about the asset's distribution.
Instead, we maximise the Boltzmann-Shannon entropy to obtain a distribution that respects only the given option prices and
is otherwise unbiased. The density can in turn be used to interpolate implied volatilities and,
by repeating this operation for a range of maturities, obtain a volatility surface.
The results agree surprisingly well with observed volatility surfaces from the markets.

Buchen and Kelly (\cite{BK}) have proposed a similar entropy maximisation
method to infer the probability distribution for an asset from call prices.
This maximisation problem corresponds to finding a
set of Lagrange multipliers. In \cite{BCM} the authors write: ``There is a
problem with this type of calculation,'' meaning that the formal Lagrange
multipliers approach is not mathematically rigorous. Using convex programming
arguments they legitimise those calculations. Like \cite{BCM} we legitimise the
results found in \cite{BK}. However, we follow a simpler approach by applying a
result of Csisz\'ar's \cite{C}.

Both \cite{BCM} and \cite{BK} present numerical methods to find the Lagrange
multipliers by solving an $N$-dimensional non-linear problem (where $N$ is the
number of constraints given by call prices). As mentioned in \cite{BK}, in the case
of close strikes the problem can be poorly conditioned. In the present work, we add
$N$ digital prices to our constraints and the resulting numerical problem is
highly simplified: Instead of an $N$-dimensional equation, we need to solve a
one-dimensional problem $F(x)=\lambda$ for $N$ different values of $\lambda$
(the same $F$ though), allowing for easy parallelisation and avoiding any
ill-conditioned problem. Additionally, $F$ is a strictly monotonic function whose
derivative is known analytically. Therefore, the Newton-Raphson method can be
used for excellent speed of convergence and stability. Alternatively to
iterative methods, one may try to find an analytical approximation for $F^{-1}$.

In a nutshell, the advantage we obtain is the localisation of the maximum entropy density into asset price buckets,
in which the functional form is a simple exponential function.
But of course there is a price to pay for this localisation technique, and
the price here is the necessity of an additional constraint for each call option used.
This extra constraint is the price of a digital option at the same strike.

The density in our case differs slightly from the one given by the method in \cite{BK}.
We therefore investigate the differences between them. In both
approaches one can also use information from a so-called prior density, if
available, leading to the concept of relative entropy (also called
$I$-divergence and Kullback-Leibler information number), and we compare the
densities obtained by this method.

After finding the MED we give the expressions for the cumulative distribution function and its inverse.
These formulas involve only arithmetic operations and expo\-nential- and logarithm-functions.
They are therefore very easy to implement and fast to compute. This is a highly useful feature for fast Monte Carlo simulations.

Furthermore, we obtain an analytical formula for the price of a call at a given strike.
By calculating several such prices at different strikes, one can recover the implied volatility smile.

We also include a section in which we calibrate to real market data.
Digital options on the S\&P 500 Index (SPX) and the CBOE Volatility Index (VIX) are traded on the Chicago Board Option Exchange (CBOE),
where they are called {\it binary options}. They are specified such that
``Expiration dates and settlement values are the same as for traditional options'' \cite{CBOE}, which is just what we need in our setup.
We show results for two cases: the first, in which we have CBOE quotes for call {\it and} digital options on the SPX and calibrate to them,
and the second, for a different maturity, in which we only have quotes for call options and therefore have to estimate digital prices from call spreads.
The method we propose here is found to work very well in both cases,
and we compare our results to those obtained by calibrating only to call prices as in \cite{BK}.

\section{The Maximum Entropy Distribution Using Calls and Digitals}
\label{MECD}

\subsection{Maximum Entropy Distribution}

We are given a fixed maturity $T$, strictly increasing strikes
$K_0 = 0, K_1, ..., K_n,$ $K_{n+1} = \infty$, and undiscounted call and digital prices
\begin{equation*}
\tilde{C}_i := C(K_i, T) / DF(0,T), \quad
\tilde{D}_i := D(K_i, T) / DF(0,T)
\end{equation*}
at these strikes. The payoffs of the call and digital options are
given in equations \eqref{ME1} and \eqref{ME2}. $DF(0,T)$ denotes the
discount factor.
Throughout we make the convention $\tilde{C}_{n+1} = \tilde{D}_{n+1} = K_{n+1} \tilde{D}_{n+1} = 0.$

Assuming risk neutral pricing, we will determine a density $g$ for the underlying asset price $S(T)$ which maximises entropy
\begin{equation*}
\label{Entropy}
E(g) := - \int_0^\infty g(x) \ln g(x) dx
\end{equation*}
under the constraints
\begin{equation}
\label{ME1}
\ExpectMeas{g}{(S(T) - K_i)^+} = \tilde{C}_i,
\quad i.e. \quad \int_{K_i}^\infty (x - K_i) g(x) dx = \tilde{C}_i
\end{equation}
and
\begin{equation}
\label{ME2}
\ExpectMeas{g}{ \mathbf{I}_{\{S(T) > K_i\}} } = \tilde{D}_i,
\quad i.e. \quad \int_{K_i}^\infty g(x) dx = \tilde{D}_i
\end{equation}
for all $i=0,...,n$.
In particular, these two constraints for $i=0$ mean that $g$ is a density,
since $\int_0^\infty g(x) dx = \tilde{D}_0 = 1$,
and that the martingale condition
$$
\ExpectMeas{g}{S(T)} = \int_0^\infty x g(x) dx = \tilde{C}_0
$$
is satisfied, since $\tilde{C}_0$ is the forward price of $S$ for time $T$.

From the second constraint it immediately follows that
\begin{equation}
\label{digitalConstraint}
\int_{K_i}^{K_{i+1}} g(x) dx = \tilde{D}_i - \tilde{D}_{i+1} \quad \forall i=0,...,n.
\end{equation}

Looking at a call spread with strikes $K_i, K_{i+1}$ raised to level $K_i$, i.e. a derivative that pays $S(T)$
if $K_i < S(T) < K_{i+1}$ and zero otherwise, we obtain the condition
\begin{equation}
\label{callConstraint}
\int_{K_i}^{K_{i+1}} x g(x) dx = (\tilde{C}_i + K_i \tilde{D}_i) - (\tilde{C}_{i+1} + K_{i+1} \tilde{D}_{i+1}) \quad \forall i=0,...,n.
\end{equation}

We now calculate the density $g$ under the constraints given above.
The purpose of Theorem \ref{LocalGlobalMaximiser} is to show that the local constraints \eqref{digitalConstraint} and \eqref{callConstraint}
are equivalent to the global constraints \eqref{ME1} and \eqref{ME2}. Moreover,
\begin{equation*}
- \int_0^\infty g(x)\ln g(x) dx = \sum_{i=0}^n \left( - \int_{K_i}^{K_{i+1}} g(x)\ln g(x)dx \right),
\end{equation*}
and, thus, we only need to maximise $-\int_{K_i}^{K_{i+1}} g(x)\ln g(x)dx$ subject to \eqref{digitalConstraint}
and \eqref{callConstraint} over each bucket.

Let $\mathcal{M}^+$ be the set of positive Borel-measurable functions defined on $[0, \infty[.$
Define
\begin{equation*}
\mathcal{X}
:= \left\{ g \in \mathcal{M}^+\ \vline \
\int_{K_i}^{\infty} g(x) dx = \tilde{D}_i,
\int_{K_i}^{\infty} (x - K_i) g(x) dx = \tilde{C}_i \; \forall i=0,...,n
\right\}
\end{equation*}
and, for all $i=0,...,n$,
\begin{equation*}
\mathcal{X}_i
:= \left \{ g \in \mathcal{M}^+ \ \vline \
\begin{array}{ll}
\displaystyle\int_{K_i}^{K_{i+1}} g(x) dx &= \tilde{D}_i - \tilde{D}_{i+1},
\\
\\
\displaystyle\int_{K_i}^{K_{i+1}} x g(x) dx &= (\tilde{C}_i + K_i \tilde{D}_i) - (\tilde{C}_{i+1} + K_{i+1} \tilde{D}_{i+1})
\end{array}
\right\}
\end{equation*}

\begin{proposition}
\label{IntersectionX}
$\mathcal{X} = \bigcap_{i=0}^n \mathcal{X}_i.$
\end{proposition}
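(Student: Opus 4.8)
The plan is to establish the two inclusions separately, both being consequences of the additivity of the Lebesgue integral over adjacent intervals together with the boundary conventions fixed at the outset. For the inclusion $\mathcal{X} \subseteq \bigcap_{i=0}^n \mathcal{X}_i$, I would take any $g \in \mathcal{X}$ and fix $i \in \{0, \ldots, n\}$. Since $g$ is positive and $\int_{K_i}^\infty g < \infty$, the integral splits additively over the decomposition $[K_i, \infty[ = [K_i, K_{i+1}[ \,\cup\, [K_{i+1}, \infty[$, so that $\int_{K_i}^{K_{i+1}} g = \int_{K_i}^\infty g - \int_{K_{i+1}}^\infty g = \tilde{D}_i - \tilde{D}_{i+1}$. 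For the call relation I would first rewrite the global constraint as $\int_{K_i}^\infty x\, g(x)\,dx = \tilde{C}_i + K_i \tilde{D}_i$, which is finite because both $\int_{K_i}^\infty (x - K_i)\,g$ and $K_i \int_{K_i}^\infty g$ are. Subtracting the analogous identity at $K_{i+1}$ then yields $\int_{K_i}^{K_{i+1}} x\, g(x)\,dx = (\tilde{C}_i + K_i \tilde{D}_i) - (\tilde{C}_{i+1} + K_{i+1} \tilde{D}_{i+1})$, which is exactly the second defining relation of $\mathcal{X}_i$.

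For the reverse inclusion $\bigcap_{i=0}^n \mathcal{X}_i \subseteq \mathcal{X}$, I would take $g$ satisfying all the local constraints and recover the global ones by telescoping. Fixing $i$ and writing $\int_{K_i}^\infty g = \sum_{j=i}^n \int_{K_j}^{K_{j+1}} g$ (a finite sum of finite terms, hence legitimate), the digital constraint telescopes to $\tilde{D}_i - \tilde{D}_{n+1} = \tilde{D}_i$ by the convention $\tilde{D}_{n+1} = 0$. Likewise $\int_{K_i}^\infty x\, g(x)\,dx = \sum_{j=i}^n \int_{K_j}^{K_{j+1}} x\, g(x)\,dx$ telescopes to $(\tilde{C}_i + K_i \tilde{D}_i) - (\tilde{C}_{n+1} + K_{n+1}\tilde{D}_{n+1})$, which equals $\tilde{C}_i + K_i \tilde{D}_i$ by the convention $\tilde{C}_{n+1} = K_{n+1}\tilde{D}_{n+1} = 0$. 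Subtracting $K_i \tilde{D}_i = K_i \int_{K_i}^\infty g$ recovers $\int_{K_i}^\infty (x - K_i)\,g = \tilde{C}_i$, so $g \in \mathcal{X}$.

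The individual manipulations are elementary, and I expect the only point genuinely requiring care to be the boundary term at $K_{n+1} = \infty$. In the call constraint the telescoping produces the formally indeterminate product $K_{n+1}\tilde{D}_{n+1} = \infty \cdot 0$, which is precisely why the paper adopts the convention $\tilde{C}_{n+1} = \tilde{D}_{n+1} = K_{n+1}\tilde{D}_{n+1} = 0$; I would invoke this convention explicitly rather than letting the cancellation happen tacitly. I would also record that positivity of $g$ together with the finiteness of the prescribed values $\tilde{C}_i, \tilde{D}_i$ guarantees that every integral appearing in the splitting is finite, so that no conditionally convergent rearrangement is hidden in the telescoping argument.
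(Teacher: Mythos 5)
Your proof is correct and follows exactly the route the paper intends: the paper's proof consists of the single remark that the claim is straightforward from \eqref{ME1}, \eqref{ME2}, \eqref{digitalConstraint} and \eqref{callConstraint}, and your splitting/telescoping argument, with explicit invocation of the convention $\tilde{C}_{n+1} = \tilde{D}_{n+1} = K_{n+1}\tilde{D}_{n+1} = 0$ for the last bucket, is precisely the omitted computation. Your added care about finiteness of the integrals (so the subtractions and the finite telescoping sum are legitimate) is a sound refinement of, not a departure from, the paper's argument.
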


\begin{proof}
It is straightforward to show this using \eqref{ME1}, \eqref{ME2},
\eqref{digitalConstraint} and \eqref{callConstraint}.
\qed
\end{proof}

For $i = 0,...,n,$ we define
\begin{equation*}
E_i(g) := - \int_{K_i}^{K_{i+1}} g(x) \ln g(x) dx \quad \forall g \in \mathcal{X}_i.
\end{equation*}
\begin{theorem}
\label{LocalGlobalMaximiser}
If $g$ is a maximiser of $E$ on $\mathcal{X}$, then $g$ is a maximiser of $E_i$ on $\mathcal{X}_i$.
Conversely, if $g$ is a maximiser of $E_i$ on $\mathcal{X}_i$ for all $i=0,...,n,$ then $g$ is a
maximiser of $E$ on $\mathcal{X}$.
\end{theorem}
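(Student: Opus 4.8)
The plan is to lean on two structural facts that the excerpt has already established: the additivity of the entropy across buckets,
\[
E(g) = \sum_{i=0}^n E_i(g),
\]
which holds because the intervals $[K_i,K_{i+1}]$ partition $[0,\infty[$ and overlap only at the endpoints $K_i$ (a set of measure zero), together with the identity $\mathcal{X} = \bigcap_{i=0}^n \mathcal{X}_i$ from Proposition \ref{IntersectionX}. The point is that the objective splits as a sum of terms, each depending only on the restriction of $g$ to a single bucket, while the feasible set factorises across buckets; this is exactly the structure that lets a global optimisation decouple into independent local ones.

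For the forward direction I would argue by contradiction with a swapping construction. Suppose $g$ maximises $E$ on $\mathcal{X}$ but fails to maximise $E_i$ on $\mathcal{X}_i$ for some index $i$; then there is an $h \in \mathcal{X}_i$ with $E_i(h) > E_i(g)$. Define $\tilde g$ to equal $h$ on the $i$-th bucket and $g$ elsewhere (the values at the measure-zero endpoints are immaterial). Since the constraints defining $\mathcal{X}_j$ involve only the restriction of a function to the $j$-th bucket, $\tilde g$ lies in $\mathcal{X}_i$ (its restriction there is $h$) and in every $\mathcal{X}_j$ with $j \neq i$ (its restriction there agrees with $g \in \mathcal{X} \subseteq \mathcal{X}_j$); hence $\tilde g \in \mathcal{X}$ by Proposition \ref{IntersectionX}. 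Additivity then gives $E(\tilde g) = E_i(h) + \sum_{j \neq i} E_j(g) > E(g)$, contradicting the maximality of $g$.

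For the converse I would first observe that if $g$ maximises $E_i$ on $\mathcal{X}_i$ for every $i$, then in particular $g \in \mathcal{X}_i$ for every $i$, so $g \in \mathcal{X}$ by Proposition \ref{IntersectionX} and $g$ is itself feasible. For any competitor $h \in \mathcal{X} = \bigcap_j \mathcal{X}_j$ we have $h \in \mathcal{X}_j$ for all $j$, so $E_j(g) \geq E_j(h)$ bucket by bucket; summing over $j$ and applying additivity yields $E(g) \geq E(h)$, which is what we want.

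The reasoning is essentially bookkeeping once the two structural facts are in hand, so I do not anticipate a deep obstacle. The one point deserving care is the additivity identity itself: splitting $\int_0^\infty g \ln g\,dx$ into a sum over buckets, and asserting that altering $g$ on a single bucket leaves the remaining integrals untouched, tacitly requires each $E_i$ to be well-defined and finite at the functions in play, so that no $\infty - \infty$ ambiguity arises when the sum is rearranged. For the maximisers under consideration this is harmless, but I would make the finiteness of the entropy on the feasible sets explicit before invoking the decomposition.
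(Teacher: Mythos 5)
Your proposal is correct and matches the paper's own proof in all essentials: the same swapping construction $\tilde g = g\cdot(1-\mathbf{I}_{[K_i,K_{i+1}[}) + h\cdot\mathbf{I}_{[K_i,K_{i+1}[}$, the same appeal to Proposition \ref{IntersectionX}, and the same bucket-additivity of $E$ for both directions (the paper phrases the forward direction directly via $E(\tilde g)=E(g)-E_i(g)+E_i(h)$ rather than by contradiction, a purely cosmetic difference). Your closing remark about making finiteness of each $E_i$ explicit to avoid $\infty-\infty$ in the decomposition is a sound refinement that the paper leaves implicit.
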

\begin{proof}
Let $g$ be a maximiser of $E$ on $\mathcal{X}$, and let $h \in \mathcal{X}_i$.
Define
\begin{equation*}
\tilde{g} = g \cdot \left(1 - \mathbf{I}_{[K_i, K_{i+1}[}\right) + h \cdot \mathbf{I}_{[K_i, K_{i+1}[}.
\end{equation*}
Since $\tilde{g} = h$ on $[K_i, K_{i+1}[$, we have $\tilde{g} \in \mathcal{X}_i$.
Moreover, for $j \neq i$, we have $\tilde{g} = g$ on $[K_j, K_{j+1}[$, and thus $\tilde{g} \in \mathcal{X}_j$.
It follows from Proposition \ref{IntersectionX} that $\tilde{g} \in \mathcal{X}$. Hence, from the maximality of $E(g)$,
we get $E(g) - E(\tilde{g}) \geq 0$. A simple computation gives $E(\tilde{g}) = E(g) -E_i(g) + E_i(h)$, and therefore
$E_i(g) - E_i(h) = E(g) - E(\tilde{g}) \geq 0$. It follows that $g$ maximises $E_i$ on $\mathcal{X}_i$.

Conversely, suppose that $g$ is a maximiser of $E_i$ on $\mathcal{X}_i$ for all $i=0,...,n.$
Let $h \in \mathcal{X}$. We have
$$
E(g) = \sum_{i=0}^n E_i(g) \geq \sum_{i=0}^n E_i(h) = E(h),
$$
which means that $g$ is a maximiser of $E$ on $\mathcal{X}$.
\qed
\end{proof}

We now give a heuristic way of determining the entropy maximiser, but in the next subsection we also give
a rigorous proof that this is indeed the correct result.
Formally applying the Lagrange multipliers theorem, we conclude that the maximiser has the form
\begin{equation}
\label{alphaBeta}
g(x) = \alpha_i e^{\beta_i x} \quad \text{on} \quad [K_i, K_{i+1}[.
\end{equation}
To see this, define the functionals
\begin{eqnarray*}
{\cal F}(g) &:=& -\int_{K_i}^{K_{i+1}} g(x)\ln g(x) dx \\
{\cal G}(g) &:=& \int_{K_i}^{K_{i+1}} g(x) dx \\
{\cal H}(g) &:=& \int_{K_i}^{K_{i+1}} x g(x) dx
\end{eqnarray*}
and solve the equation
$$
\delta{\cal F}(g) + \lambda_1 \delta{\cal G}(g) + \lambda_2 \delta{\cal H}(g) = 0
$$
for the Fr\^{e}chet derivatives.
It follows that
$$
\int_{K_i}^{K_{i+1}} (\ln g(x) + 1) \delta g(x) dx = \int_{K_i}^{K_{i+1}} (\lambda_1 + \lambda_2 x) \delta g(x) dx.
$$
Therefore, on the interval $[K_i, K_{i+1}[$, we must have
$$
\ln g(x) + 1 = \lambda_1 + \lambda_2 x,
$$
and, introducing $\alpha_i := e^{\lambda_1 - 1}$ and $\beta_i := \lambda_2$, we
obtain \eqref{alphaBeta}.

Using the explicit form of $g$ just found in \eqref{digitalConstraint} and \eqref{callConstraint} gives
\begin{eqnarray}
\label{alpha1}
\alpha_i \int_{K_i}^{K_{i+1}} e^{\beta_i x} dx &=& \tilde{D}_i - \tilde{D}_{i+1}, \\
\label{beta1}
\alpha_i \int_{K_i}^{K_{i+1}} x e^{\beta_i x} dx &=& (\tilde{C}_i + K_i \tilde{D}_i) - (\tilde{C}_{i+1} + K_{i+1} \tilde{D}_{i+1})
\end{eqnarray}
for all $i=0,...,n$. For $i < n$, solving \eqref{alpha1} for $\alpha_i$ and then \eqref{beta1} for $\beta_i$, using integration by parts,
gives
\begin{eqnarray}
\label{alpha2}
\alpha_i
&=&
\beta_i \frac{\tilde{D}_i - \tilde{D}_{i+1}}{e^{\beta_i K_{i+1}} - e^{\beta_i K_i}}, \\
\label{beta2}
\frac{K_{i+1} e^{\beta_i K_{i+1}} - K_i e^{\beta_i K_i}}{e^{\beta_i K_{i+1}} - e^{\beta_i K_i}} - \frac{1}{\beta_i}
&=&
\frac{(\tilde{C}_i + K_i \tilde{D}_i) - (\tilde{C}_{i+1} + K_{i+1} \tilde{D}_{i+1})}{\tilde{D}_i - \tilde{D}_{i+1}}.
\end{eqnarray}
Define
$$
\Theta(\beta; K_i, K_{i+1}) := \frac{K_{i+1} e^{\beta K_{i+1}} - K_i e^{\beta K_i}}{e^{\beta K_{i+1}} - e^{\beta K_i}} - \frac{1}{\beta}.
$$
It follows that
$$
\Theta'(\beta; K_i, K_{i+1}) = \frac{1}{\beta^2} - (K_{i+1}-K_i)^2 \frac{e^{\beta (K_{i+1}+K_i)}}{(e^{\beta K_{i+1}} - e^{\beta K_i})^2}.
$$

\begin{figure}[ht]
\includegraphics[width=\textwidth]{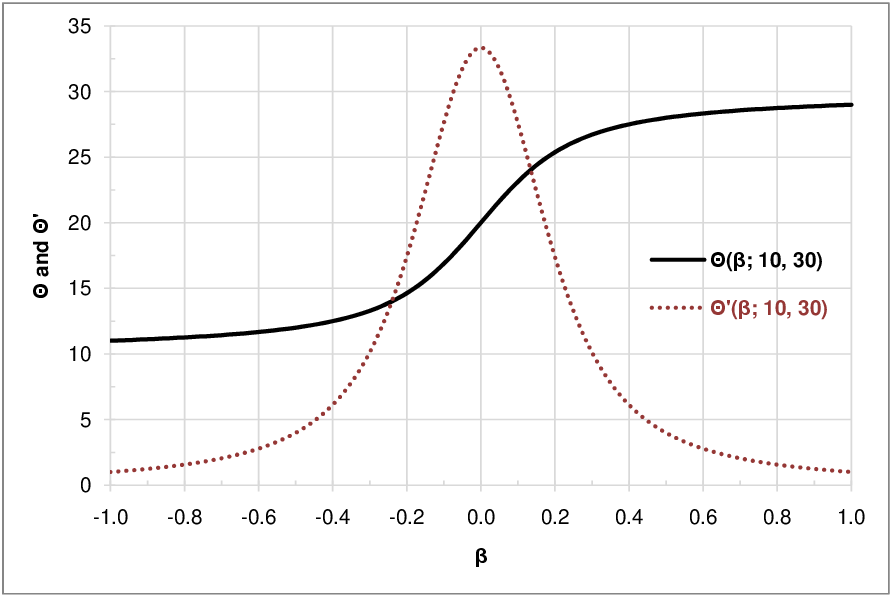}
\caption{Graphs of $\Theta(\beta; 10, 30)$ and $\Theta'(\beta; 10, 30)$ (here $'$ means derivative with respect to $\beta$).}
\label{fig:1}
\end{figure}
Figure \ref{fig:1} shows the graphs of $\Theta(\beta; K_i, K_{i+1})$ and $\Theta'(\beta; K_i, K_{i+1})$ for $K_i = 10$ and $K_{i+1} = 30$.
It suggests that equation \eqref{beta2} has a unique solution if the quantity on the right hand side is in $]K_i, K_{i+1}[$.
This turns out to be the case, as we show with the following proposition.

\begin{proposition}
Let $i \in \{0,...,n\}$. If there is no arbitrage opportunity implied by $\tilde{D}_i, \tilde{D}_{i+1}, \tilde{C}_i, \tilde{C}_{i+1}$,
then there is a unique solution $(\alpha_i, \beta_i)$ for equations \eqref{alpha1} and \eqref{beta1}.
\end{proposition}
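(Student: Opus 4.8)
The plan is to treat the generic buckets $i < n$ and the semi-infinite bucket $i = n$ separately, and in the generic case to reduce the whole system to the single scalar equation \eqref{beta2}. Once a solution $\beta_i$ of \eqref{beta2} is known, \eqref{alpha2} determines $\alpha_i$ uniquely, and one checks that $\alpha_i > 0$ because $\tilde{D}_i - \tilde{D}_{i+1} > 0$ and $\beta_i/(e^{\beta_i K_{i+1}} - e^{\beta_i K_i})$ is positive (numerator and denominator share a sign since $K_{i+1} > K_i$). So everything comes down to showing that \eqref{beta2}, i.e. $\Theta(\beta_i; K_i, K_{i+1}) = M_i$ where $M_i$ denotes its right-hand side, has exactly one solution. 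Writing $M_i$ via \eqref{digitalConstraint} and \eqref{callConstraint} as the conditional mean $\mathbb{E}^g[S(T) \mid K_i < S(T) < K_{i+1}]$, I would show that $\Theta(\cdot; K_i, K_{i+1})$ is a continuous strictly increasing bijection from $\mathbb{R}$ onto $]K_i, K_{i+1}[$ and that no-arbitrage forces $M_i$ into this open interval.

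The crux is the strict monotonicity of $\Theta$, for which the stated formula for $\Theta'$ is not manifestly signed. Setting $c := K_{i+1} - K_i > 0$, I would factor $e^{\beta K_{i+1}} - e^{\beta K_i} = 2 e^{\beta(K_i + K_{i+1})/2}\sinh(\beta c/2)$, so that the second term of $\Theta'$ collapses and $\Theta'(\beta) = \beta^{-2} - c^2/(4\sinh^2(\beta c/2))$. The substitution $v := \beta c/2$ then gives $\Theta'(\beta) = (c^2/4)(v^{-2} - \sinh^{-2} v)$, which is strictly positive for all $\beta \neq 0$ by the elementary inequality $\sinh^2 v > v^2$ (valid for $v \neq 0$). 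The point $\beta = 0$ is a removable singularity: expanding in $\beta$ gives $\Theta(0) = (K_i + K_{i+1})/2$ and $\Theta'(0) = c^2/12 > 0$, so $\Theta$ extends to a $C^1$, strictly increasing function on all of $\mathbb{R}$.

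Next I would pin down the range. As $\beta \to +\infty$ the term $e^{\beta K_{i+1}}$ dominates, so $\Theta \to K_{i+1}$; as $\beta \to -\infty$ the term $e^{\beta K_i}$ dominates, so $\Theta \to K_i$ (and $1/\beta \to 0$ in both cases). Combined with continuity and strict monotonicity, $\Theta(\cdot; K_i, K_{i+1})$ is a bijection onto $]K_i, K_{i+1}[$. Finally, the relevant no-arbitrage conditions on $\tilde{D}_i, \tilde{D}_{i+1}, \tilde{C}_i, \tilde{C}_{i+1}$ are precisely that the bucket mass $\tilde{D}_i - \tilde{D}_{i+1}$ is positive and that the conditional mean $M_i$ lies strictly between $K_i$ and $K_{i+1}$; hence $M_i \in \, ]K_i, K_{i+1}[$ admits a unique preimage $\beta_i$, giving a unique pair $(\alpha_i, \beta_i)$.

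For the boundary bucket $i = n$, where $K_{n+1} = \infty$, the integrals in \eqref{alpha1} and \eqref{beta1} converge only for $\beta_n < 0$; with the conventions $\tilde{C}_{n+1} = \tilde{D}_{n+1} = 0$, integrating over $[K_n, \infty[$ reduces the system to $K_n - 1/\beta_n = K_n + \tilde{C}_n/\tilde{D}_n$, so that $\beta_n = -\tilde{D}_n/\tilde{C}_n$ is uniquely determined and indeed negative, since no-arbitrage makes $\tilde{C}_n/\tilde{D}_n = \mathbb{E}^g[S(T) - K_n \mid S(T) > K_n]$ strictly positive, whence $\alpha_n > 0$ follows. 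The main obstacle throughout is the monotonicity step of the second paragraph: the remaining work is continuity, limits, and bookkeeping, whereas the positivity of $\Theta'$ hinges on recognising the $\sinh$ structure hidden in the given derivative formula.
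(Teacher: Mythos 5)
Your proof is correct and follows essentially the same route as the paper's: reduction of the system to the scalar equation \eqref{beta2}, strict monotonicity and the limits $K_i$, $K_{i+1}$ of $\Theta$ at $\mp\infty$, and the explicit solution $\beta_n = -\tilde D_n/\tilde C_n$ for the last bucket. The only cosmetic difference is that the paper first normalises to the unit interval via the change of variables $x = \beta(K_{i+1}-K_i)$ and proves $F'(x) = 1/x^2 - e^x/(e^x-1)^2 > 0$ from $x^{-1}\sinh x > 1$, whereas you keep the original variables and factor $e^{\beta K_{i+1}} - e^{\beta K_i} = 2e^{\beta(K_i+K_{i+1})/2}\sinh(\beta c/2)$; it is the same $\sinh$ inequality in either guise. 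One step you should tighten: to place $M_i$ (the paper's $\bar K$) in $]K_i, K_{i+1}[$ you interpret it as $\mathbb{E}^g[S(T)\mid K_i < S(T) < K_{i+1}]$, but $g$ is precisely the density whose existence is being established, so as written this is circular --- or else it silently invokes the existence of some consistent pricing measure, a form of the fundamental theorem of asset pricing that the paper never needs. The paper instead uses a purely static comparison: the three payoffs $K_i$, $S(T)$ and $K_{i+1}$ on the event $\{K_i < S(T) < K_{i+1}\}$ are replicable by portfolios of the quoted calls and digitals, and absence of arbitrage orders their prices as
\begin{equation*}
K_i(\tilde D_i - \tilde D_{i+1}) < (\tilde C_i + K_i\tilde D_i)-(\tilde C_{i+1}+K_{i+1}\tilde D_{i+1}) < K_{i+1}(\tilde D_i - \tilde D_{i+1}),
\end{equation*}
which is exactly $K_i < \bar K < K_{i+1}$ and in particular yields the strictly positive bucket mass you use to get $\alpha_i > 0$. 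Substituting that replication argument (or an explicit appeal to a consistent pricing measure) for your conditional-mean gloss makes the proof self-contained; everything else, including the removable singularity at $\beta = 0$ with $\Theta'(0) = c^2/12$ and the $i=n$ bookkeeping, matches the paper.
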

\begin{proof}
Define
$$
\bar{K} := \frac{(\tilde{C}_i + K_i \tilde{D}_i) - (\tilde{C}_{i+1} + K_{i+1} \tilde{D}_{i+1})}{\tilde{D}_i - \tilde{D}_{i+1}}.
$$
We first show that we must have $K_i<\bar{K}<K_{i+1}$. This can
be seen by comparing the prices of three derivatives: They pay,
respectively, $K_i$, $S(T)$ and $K_{i+1}$ if $K_i < S(T) < K_{i+1}$
and zero otherwise. Under the assumption that there is no arbitrage
opportunity, it follows immediately that the second derivative is more
expensive than the first one and cheaper that the third one. It is also
clear that they can be replicated by portfolios of calls and digitals and
their prices, in increasing order, are
\[
K_i(\tilde{D_i} - \tilde{D}_{i+1}) < (\tilde{C}_i + K_i\tilde{D}_i) -
(\tilde{C}_{i+1} + K_{i+1}\tilde{D}_{i+1}) < K_{i+1}(\tilde{D_i} -
\tilde{D}_{i+1}).
\]
From the definition of $\bar{K}$ the middle quantity above is
$\bar{K}(\tilde{D_i} - \tilde{D}_{i+1})$ and the result follows.

Next we show that if $K_i < \bar{K} < K_{i+1}$, then there is a unique solution $(\alpha_i, \beta_i)$
for equations \eqref{alpha1} and \eqref{beta1}.
We begin with the case $i < n$. As we have just seen, \eqref{alpha1} and \eqref{beta1} are then equivalent to
\eqref{alpha2} and \eqref{beta2}. Without loss of generality, we may assume $K_i = 0$ and $K_{i+1} = 1$.
Indeed, it is straightforward to see that the change of variables
$$
(\beta, \bar{K}) \leftrightarrow (x, \lambda), \
x := \beta (K_{i+1} - K_i), \
\lambda := \frac{\bar{K} - K_i}{K_{i+1} - K_i}
$$
transforms the equation $\Theta(\beta; K_i, K_{i+1})=\bar{K}$ into $\Theta(x; 0, 1) = \lambda$, with $\lambda \in ]0, 1[$.
Using l'H\^opital's rule we obtain that the function $F: \mathbb{R} \rightarrow \mathbb{R}$ given by
\begin{equation*}
F(x) := \left\{
\begin{array}{cl}
\displaystyle \frac{e^x}{e^x - 1} - \frac{1}{x} \quad & \text{if } x \neq 0, \\
\\
\displaystyle \frac{1}{2} & \text{if } x = 0
\end{array}
\right.
\end{equation*}
is a continuous extension of $\Theta(.; 0, 1)$.
It is easy to see that
$\underset{x \to -\infty}{\lim} F(x) = 0$ and
$\underset{x \to +\infty}{\lim} F(x) = 1$.
Hence the equation $F(x) = \lambda$ has a solution. To prove that the solution is unique, we shall now show that $F$ is strictly increasing.
Again by l'H\^opital's rule, we obtain that $F$ is differentiable at $x=0$ and $F'(0) = 1/12$
(this is particularly useful because $x=0$ is an ideal starting point for the Newton-Raphson method).
For $x \neq 0$ we have
$$
F'(x) = \frac{1}{x^2} - \frac{e^x}{(e^x - 1)^2}.
$$
Recall that $x^{-1} \sinh x > 1$ for all $x \in \mathbb{R}\setminus\{0\}$. Hence
$$
\frac{e^{x/2} - e^{-x/2}}{x} > 1
\Rightarrow \frac{e^x - 1}{x} > e^{x/2}
\Rightarrow \left( \frac{e^x - 1}{x} \right)^2 > e^x
\Rightarrow F'(x) > 0.
$$
We conclude that $F'(x) > 0$ for all $x \in \mathbb{R}$. Therefore $F$ is strictly increasing.
Finally, we consider the case $i=n$. Equations \eqref{alpha1} and \eqref{beta1} then become
\begin{eqnarray*}
\alpha_n \int_{K_n}^\infty e^{\beta_n x} dx &=& \tilde{D}_n, \\
\alpha_n \int_{K_n}^\infty x e^{\beta_n x} dx &=& \tilde{C}_n + K_n \tilde{D}_n.
\end{eqnarray*}
The first equation implies that $\beta_n < 0$.
Solving it for $\alpha_n$ and the second equation for $\beta_n$ gives
$$
\alpha_n = -\frac{\beta_n \tilde{D}_n}{e^{\beta_n K_n}}, \ \beta_n = -\frac{\tilde{D}_n}{\tilde{C}_n}.
$$
\qed
\end{proof}

Note that we have shown that $F$ is itself a continuously differentiable probability distribution function.
For such a function there might already exist an inversion-algorithm.

\subsection{A Rigorous Way of Finding the Entropy Maximiser}
\label{RigorousProof}

Like others, we have formally derived the expression for the entropy maximiser using the Lagrange multipliers method. However, as pointed out in \cite{BCM} ``there is a problem with this type of calculation.'' Recall that the Lagrange multipliers theorem requires continuous differentiability for objective and constraint functionals in a neighbourhood of the maximiser. However, the Boltzmann-Shannon entropy functional is finite only for densities in
\begin{equation*}
{\cal O} := \{g\in L^1(0,\infty)\ |\ g\ln g\in L^1(0,\infty)\},
\end{equation*}
which has empty interior on $L^1(0,\infty)$. Therefore a maximiser is not an interior point of $\cal O$. Even worse, the entropy is far from being continuously differentiable since it is nowhere continuous.

In \cite{BCM}, convex programming arguments are considered to circumvent this problem. Here we present a new approach based on a result by Csisz\'ar \cite{C}.

When no prior density is given we are interested in the (non-relative) entropy of $g$
\begin{equation*}
E_I(g) := -\int_I g(x)\ln g(x)dx,
\end{equation*}
where $I\subset[0,\infty[$ is an interval. However, Csisz\'ar's results deal with relative entropy
of a probability density $g$ with respect to a probability measure $R$ on $I$
\begin{equation*}
E_I(g|R) := -\int_I g(x)\ln g(x)dR(x).
\end{equation*}
Roughly speaking, we are interested in the ``relative entropy'' with respect to the Lebesgue measure which, in general, is not a probability measure.

For $i=0, \dots,n-1$, $I=[K_i, K_{i+1}[$ is bounded. In that case, the problem can easily fit in Csisz\'ar's framework by considering the normalised Lebesgue probability measure $dR(x)=(K_{i+1}-K_i)^{-1}dx$. However, it is impossible to use this trick for the global problem, $I=[0,\infty[$, and for the last bucket, $I=[K_n,\infty[$, since there is no normalisation constant which turns the Lebesgue measure into a probability measure on these intervals.

Nevertheless, it is possible to turn the two problems over unbounded intervals into equivalent ones that do fit in Csisz\'ar's framework.
This is the subject of the next proposition. Moreover, the same arguments also apply to bounded intervals.
Therefore, in contrast to \cite{BCM}, we do not need to make any distinction between bounded and unbounded intervals.

For the sake of simplicity, the statement in the following proposition considers only two {\it main} constraints, namely, the total mass and the mean.
This includes the bucket problems and excludes the global problem (where additional constraints are given).
However, the proof works even for an infinite number of constraints provided the two main ones are among them.

\begin{proposition}
\label{Bijection}
Let $I \subseteq [0, \infty[$ be an interval.
Define $m(x)=\theta e^{-x}$ for all $x\in I$, where $\theta>0$ is a normalisation constant such that $dR(x)=m(x)dx$ is a probability measure on $I$. Let $a_0, a_1 > 0$. Then the mapping $g\mapsto g/m$ is a bijection from
\begin{equation*}
\Omega := \left\{g\in \mathcal{M}^+\ \vline\ \int_I g(x)dx=a_0,\ \int_I xg(x)dx=a_1 \right\}
\end{equation*}
onto
\begin{equation*}
\tilde \Omega := \left\{\tilde g\in \mathcal{M}^+\ \vline\ \int_I \tilde g(x)dR(x)=a_0,\ \int_I x\tilde g(x)dR(x)=a_1\right\}.
\end{equation*}
Moreover, $g$ is a maximiser of $E_I$ on $\Omega$ if and only if $g/m$ is a maximiser of $E_I(\cdot|R)$ on $\tilde\Omega$.
\end{proposition}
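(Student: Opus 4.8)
The plan is to treat the two assertions separately, noting that the bijection is essentially a change of reference measure and that the real content lies in comparing the two entropy functionals. The guiding observation is that, because $m(x)=\theta e^{-x}$ has logarithm $\ln m(x)=\ln\theta-x$ which is \emph{affine} in $x$, the ``correction term'' relating $E_I$ and $E_I(\cdot|R)$ will collapse to a constant fixed by the two constraints $a_0$ and $a_1$. This is exactly why the specific exponential weight $m$ is chosen, and it is what will make the two maximiser problems equivalent.

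First I would establish the bijection. Since $dR(x)=m(x)\,dx$ with $m>0$, the map $g\mapsto g/m$ has the obvious candidate inverse $\tilde g\mapsto m\tilde g$, and both clearly send $\mathcal{M}^+$ into $\mathcal{M}^+$. It then suffices to verify that the constraints are carried across. For $g\in\Omega$ and $\tilde g=g/m$ one computes
\[
\int_I \tilde g\,dR=\int_I \frac{g}{m}\,m\,dx=\int_I g\,dx=a_0,\qquad \int_I x\tilde g\,dR=\int_I x\,g\,dx=a_1,
\]
so $\tilde g\in\tilde\Omega$; the reverse inclusion is identical. Since the two maps are mutually inverse, this gives the desired bijection between $\Omega$ and $\tilde\Omega$.

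Next I would prove the entropy relation. Writing $\tilde g=g/m$, so that $\ln\tilde g=\ln g-\ln m$ and $\tilde g\,m=g$, a direct substitution yields
\[
E_I(\tilde g|R)=-\int_I \tilde g\ln\tilde g\,m\,dx=-\int_I g(\ln g-\ln m)\,dx=E_I(g)+\int_I g\ln m\,dx.
\]
Now the affine form of $\ln m$ enters: using $\ln m(x)=\ln\theta-x$ together with the two defining constraints of $\Omega$,
\[
\int_I g\ln m\,dx=\ln\theta\int_I g\,dx-\int_I x g\,dx=a_0\ln\theta-a_1,
\]
which is a constant independent of the particular $g\in\Omega$. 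Hence $E_I(\tilde g|R)=E_I(g)+(a_0\ln\theta-a_1)$ for every corresponding pair, and since the offset is constant on $\Omega$, $g$ maximises $E_I$ over $\Omega$ if and only if $g/m$ maximises $E_I(\cdot|R)$ over $\tilde\Omega$.

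The main obstacle I anticipate is not the algebra but its rigorous justification: one must check that the integral manipulations are legitimate and that the entropy relation persists even when the entropies are not finite. The cleanest route is to read the identity $E_I(\tilde g|R)=E_I(g)+(a_0\ln\theta-a_1)$ in the extended reals, so that it continues to hold, with value $-\infty$ on both sides simultaneously, whenever $g\ln g$ fails to be integrable; since the additive constant is finite, the maximiser correspondence is preserved regardless. Particular care is needed for the unbounded cases $I=[K_n,\infty[$ and $I=[0,\infty[$, where one should confirm that the constraint $\int_I x g\,dx=a_1<\infty$ controls the tail so that $\int_I g\ln m\,dx$ is well defined and finite. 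This is precisely what lets the argument dispense with the bounded/unbounded distinction made in \cite{BCM}.
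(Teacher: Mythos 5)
Your proposal is correct and follows essentially the same route as the paper's proof: the same bijection $g\mapsto g/m$ with inverse $\tilde g\mapsto m\tilde g$, the same verification that the constraints are preserved under the change of reference measure, and the same computation exploiting the affine form of $\ln m$ to show $E_I(g/m\,|\,R)=E_I(g)+a_0\ln\theta-a_1$ is a constant shift on $\Omega$. Your closing remark on reading the identity in the extended reals and on tail control via $\int_I xg(x)dx=a_1<\infty$ is a welcome extra precision that the paper's proof leaves implicit.
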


\begin{proof}
Define $\Psi: \mathcal{M}^+ \rightarrow \mathcal{M}^+$ by $\Psi(g) := g/m$. Since $m$ is strictly positive, it follows immediately that $\Psi$ is a well defined bijection.

We shall show that $\Psi$ preserves some linear functionals. Let $g\in \mathcal{M}^+$ and $f:I\rightarrow{\mathbb R}$. Then we have
\begin{equation*}
\int_I f(x)\big(\Psi(g)(x)\big)dR(x) = \int_I f(x)\frac{g(x)}{m(x)}dR(x) = \int_I f(x) g(x) dx.
\end{equation*}
In particular, applying this result to $f(x)\equiv 1$ and to $f(x)\equiv x$, it follows immediately that $\Psi$ maps $\Omega$ onto $\tilde\Omega$.

To complete the proof it suffices to show that if $g,h\in\Omega$, then
\begin{equation*}
E_I(g)-E_I(h)\ge 0 \Longleftrightarrow E_I(\Psi(g)|R)-E_I(\Psi(h)|R)\ge 0.
\end{equation*}
In fact, we shall show a stronger result, namely, that the two differences above are equal. This is equivalent to showing that $E_I(\Psi(g)|R) - E_I(g)$ does not depend on $g\in\Omega$. We have
\begin{eqnarray*}
E_I(\Psi(g)|R)
&=&
-\int_I \big( \Psi(g)(x) \big) \ln \big( \Psi(g)(x) \big) dR(x) \\
&=&
-\int_I \frac{g(x)}{m(x)} \ln \left( \frac{g(x)}{m(x)} \right) dR(x) \\
&=&
-\int_I g(x) \ln \left( \frac{g(x)}{m(x)} \right) dx \\
&=&
-\int_I g(x) \ln g(x) dx + \int_I g(x) \ln m(x)dx \\
&=&
E_I(g) +\ln \theta \int_I g(x) dx - \int_I x g(x)dx \\
&=&
E_I(g) + a_0 \ln \theta - a_1.
\end{eqnarray*}
\qed
\end{proof}

Later we will restate and apply a partial version of a theorem by Czisz\'ar. But before we do so, let us say a few words about it.

It is very natural to apply the Lagrange multipliers theorem for maximisation problems under constraints. However, there are many cases where other techniques are used - for instance in the proof of the existence of projection on a convex set of a Hilbert space. In that case, geometric arguments, including the parallelogram identity, are used.

Many texts suggest thinking of the relative entropy of one probability measure with respect to another as a quantity measuring how much they differ. Moreover, they present some similarities between relative entropy and a metric. Unfortunately, they say, this analogy does not go too far. Csisz\'ar's paper pushes these similarities a bit further, showing a relation analogous to the parallelogram identity. Furthermore, he proves the existence of an entropy minimiser%
\footnote{In Csisz\'ar's paper, the minus sign in front of entropy's definition is dropped and its minimisation (rather than maximisation) is studied.}
under convex constraints by similar arguments that show the existence of projection on convex subsets of Hilbert spaces.

We restate here a partial version of his Theorem 3.1 sufficient for our purposes.

\begin{theorem} [Csisz\'ar]
Let $R$ be a probability on a measurable space $(X, {\cal H})$. Let $\{f_\gamma\}_{\gamma\in\Gamma}$ be an arbitrary set of real-valued
$\cal H$-measurable functions on $X$ and $\{a_\gamma\}_{\gamma\in\Gamma}$ be real constants. Let $\cal E$ be
the set of all those probabilities $P$ on $(X,{\cal H})$ for which the integrals $\int f_\gamma dP$ exist and equal $a_\gamma$
$(\gamma\in\Gamma)$. Then, if there exists $Q\in{\cal E}$ such that $Q\ll R$ and its Radon-Nikodym derivative has the form
\begin{equation}
\label{Maximiser}
\frac{\partial Q}{\partial R}(x) = ce^{q(x)}\quad\forall x\in I,
\end{equation}
where $c>0$ and $q$ belongs to the linear space spanned by the $f_\gamma$'s, then
\begin{equation*}
\int \frac{\partial Q}{\partial R}(x)\ln\left(\frac{\partial Q}{\partial R}(x)\right)dR(x)
\le
\int \frac{\partial P}{\partial R}(x)\ln\left(\frac{\partial P}{\partial R}(x)\right)dR(x)
\end{equation*}
for all $P\in{\cal E}$ such that $P\ll R$.
\end{theorem}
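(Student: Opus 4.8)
The plan is to recognise the quantity $\int \frac{\partial P}{\partial R}\ln\frac{\partial P}{\partial R}\,dR$ as the $I$-divergence (relative entropy) of $P$ with respect to $R$, which I will abbreviate $D(P\|R)$, and to show that $Q$ minimises it over all $P\in{\cal E}$ with $P\ll R$. The engine is a Pythagorean-type identity
\begin{equation*}
D(P\|R) = D(P\|Q) + D(Q\|R),
\end{equation*}
valid for every such $P$. Once this is in hand the theorem is immediate: Gibbs' inequality gives $D(P\|Q)\ge 0$, hence $D(P\|R)\ge D(Q\|R)$, which is exactly the asserted inequality. Note that this is the elementary, \emph{sufficiency} half of Csisz\'ar's result — the existence and geometry of the minimiser are the deep parts; verifying that a \emph{given} member of exponential form is optimal is comparatively soft.

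First I would settle the absolute-continuity bookkeeping that gives every term a meaning. Since $\frac{\partial Q}{\partial R}=ce^{q}$ is strictly positive $R$-almost everywhere, we have both $Q\ll R$ and $R\ll Q$; combined with the hypothesis $P\ll R$ this yields $P\ll Q$, so $\frac{\partial P}{\partial Q}=\frac{\partial P/\partial R}{\partial Q/\partial R}$ is well defined $R$-a.e.\ and $D(P\|Q)$ makes sense. Gibbs' inequality $D(P\|Q)\ge 0$ then follows from Jensen's inequality applied to the convex map $t\mapsto t\ln t$ (or from $\ln t\le t-1$), using $\int \frac{\partial P}{\partial Q}\,dQ=1$.

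The heart of the argument is the identity itself. I would add and subtract $\int \frac{\partial P}{\partial R}\ln\frac{\partial Q}{\partial R}\,dR$ to write
\begin{equation*}
D(P\|R)-D(Q\|R)=\int \frac{\partial P}{\partial R}\ln\frac{\partial P/\partial R}{\partial Q/\partial R}\,dR+\int\left(\frac{\partial P}{\partial R}-\frac{\partial Q}{\partial R}\right)\ln\frac{\partial Q}{\partial R}\,dR.
\end{equation*}
The first integral is precisely $D(P\|Q)$. For the second, the exponential form is decisive: $\ln\frac{\partial Q}{\partial R}=\ln c+q$ is affine in the constraint functions, so the integral splits into $\ln c\int(\frac{\partial P}{\partial R}-\frac{\partial Q}{\partial R})\,dR$, which vanishes because $P$ and $Q$ are both probabilities, plus $\int(\frac{\partial P}{\partial R}-\frac{\partial Q}{\partial R})q\,dR$. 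Writing $q=\sum_k\lambda_k f_{\gamma_k}$ as a finite combination, this last term equals $\sum_k\lambda_k(\int f_{\gamma_k}\,dP-\int f_{\gamma_k}\,dQ)=\sum_k\lambda_k(a_{\gamma_k}-a_{\gamma_k})=0$, since both $P,Q\in{\cal E}$. This establishes the identity.

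The main obstacle is not the algebra but the integrability needed to legitimise it. I would first dispose of the trivial case $D(P\|R)=+\infty$, for which the conclusion holds automatically, and then work on the set where the densities are finite. One must check that $q$ is integrable against both $P$ and $Q$ — guaranteed because the moments $\int f_\gamma\,dP=\int f_\gamma\,dQ=a_\gamma$ are assumed to exist and $q$ is a finite linear combination of the $f_\gamma$ — and, more delicately, that the subtraction defining $D(P\|R)-D(Q\|R)$ and the splitting of the second integral never produce ill-defined $\infty-\infty$ forms. Making these cancellations rigorous, rather than merely formal, is the subtle point; once the bookkeeping is controlled, combining the identity with $D(P\|Q)\ge 0$ completes the proof.
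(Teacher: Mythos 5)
Your proof is correct, but there is nothing in the paper to compare it against: the paper does not prove this statement at all --- it is restated, without proof, as ``a partial version of his Theorem 3.1'' from Csisz\'ar's original article. That said, your argument is exactly the mechanism the paper alludes to in the preceding discussion (the ``relation analogous to the parallelogram identity''), specialised to the soft, sufficiency half of Csisz\'ar's result: the Pythagorean identity $D(P\|R)=D(P\|Q)+D(Q\|R)$ plus Gibbs' inequality, with the exponential form of $dQ/dR$ making the cross term collapse through the shared moment constraints. Your bookkeeping is sound: $dQ/dR=ce^{q}>0$ gives $R\ll Q$, hence $P\ll Q$ and the chain rule for densities; $\int q\,dP=\int q\,dQ$ is finite because $q$ is a \emph{finite} linear combination of the $f_\gamma$ and both measures lie in $\mathcal{E}$; consequently $D(Q\|R)=\ln c+\int q\,dQ$ is finite. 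The one refinement worth making concerns the $\infty-\infty$ issue you flag at the end: rather than case-splitting on $D(P\|R)=+\infty$ and justifying a subtraction, avoid the difference altogether by writing, $P$-a.e.,
\begin{equation*}
\ln\frac{dP}{dR}=\ln\frac{dP}{dQ}+\ln c+q ,
\end{equation*}
and integrating against $P$. Since $\ln c+q\in L^1(P)$ and the map $t\mapsto t\ln t$ is bounded below by $-1/e$, so that $D(P\|Q)$ and $D(P\|R)$ are well defined with values in $[0,+\infty]$, additivity of the integral applies directly and yields $D(P\|R)=D(P\|Q)+D(Q\|R)$ with no ill-defined differences, whether or not $D(P\|R)$ is finite. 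With that rearrangement the proof is complete and rigorous. What it does not (and for the stated version need not) deliver is the hard content of Csisz\'ar's theorem --- existence of the $I$-projection and its geometric characterisation under general convex constraints --- which both the paper and the statement sidestep by \emph{assuming} a $Q\in\mathcal{E}$ of exponential form is given, precisely as you observed.
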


Now we prove that $g$, given by \eqref{alphaBeta}, \eqref{alpha2} and \eqref{beta2}, is indeed an entropy maximiser.

\begin{theorem}
Let $i\in{0, \dots, n}$, $I=[K_i, K_{i+1}[$. Let $\alpha_i$ and $\beta_i$ be defined by equations \eqref{alpha2} and \eqref{beta2}.
Then $g:I\rightarrow{\mathbb R}$ given by
\begin{equation*}
g(x) = \alpha_i e^{\beta_i x} \quad \forall x\in I
\end{equation*}
maximises $E_i$ on $\mathcal{X}_i$.
\end{theorem}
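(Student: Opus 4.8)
The plan is to reduce the maximisation over $\mathcal{X}_i$ to the situation covered by Csisz\'ar's theorem stated above, using Proposition \ref{Bijection} as the bridge. First I would observe that $\mathcal{X}_i$ is exactly the set $\Omega$ of Proposition \ref{Bijection} for the interval $I=[K_i,K_{i+1}[$ with the constants $a_0=\tilde D_i-\tilde D_{i+1}$ and $a_1=(\tilde C_i+K_i\tilde D_i)-(\tilde C_{i+1}+K_{i+1}\tilde D_{i+1})$. Since $\alpha_i$ and $\beta_i$ solve \eqref{alpha1} and \eqref{beta1}, the candidate $g(x)=\alpha_i e^{\beta_i x}$ satisfies both defining constraints, i.e.\ $g\in\mathcal{X}_i=\Omega$. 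By Proposition \ref{Bijection}, $g$ maximises $E_i=E_I$ on $\mathcal{X}_i$ if and only if $\tilde g:=g/m$ maximises the relative entropy $E_I(\cdot\mid R)$ on $\tilde\Omega$, where $dR=m\,dx$ with $m(x)=\theta e^{-x}$. Thus it suffices to prove that $\tilde g$ maximises $E_I(\cdot\mid R)$ on $\tilde\Omega$, equivalently that $\tilde g$ minimises $\int_I \tilde g\ln\tilde g\,dR$.

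Next I would bring this into the exact shape demanded by Csisz\'ar. The subtlety is that his theorem speaks of \emph{probability} measures $P$, whereas an element $\tilde g\in\tilde\Omega$ yields a measure $\tilde g\,dR$ of total mass $a_0$, which in general is not $1$. I would therefore normalise, setting $dP=(\tilde g/a_0)\,dR$; this is a bijection between $\tilde\Omega$ and the set $\mathcal{E}$ of probabilities $P\ll R$ on $I$ with $\int_I x\,dP=a_1/a_0$, the mass constraint being automatic for probabilities. A direct computation gives $\int_I \tfrac{dP}{dR}\ln\tfrac{dP}{dR}\,dR=\tfrac1{a_0}\int_I\tilde g\ln\tilde g\,dR-\ln a_0$, so that, $a_0>0$ being fixed, minimising over $\tilde g\in\tilde\Omega$ is equivalent to minimising $\int_I\tfrac{dP}{dR}\ln\tfrac{dP}{dR}\,dR$ over $P\in\mathcal{E}$. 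The candidate corresponds to $\tfrac{dQ}{dR}=\tilde g/a_0=(\alpha_i/\theta a_0)\,e^{(\beta_i+1)x}$, which is of the form $c\,e^{q(x)}$ with $c=\alpha_i/(\theta a_0)>0$ and $q(x)=(\beta_i+1)x$ lying in the linear span of the constraint functions. Since $g\in\mathcal{X}_i$ we have $Q\in\mathcal{E}$, and $Q\ll R$ because $R$ has a strictly positive density. Csisz\'ar's theorem then yields that $Q$ minimises the $I$-divergence over all $P\in\mathcal{E}$ with $P\ll R$; unwinding the normalisation and Proposition \ref{Bijection} shows that $g$ maximises $E_i$ on $\mathcal{X}_i$.

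The main obstacle is precisely this matching between Csisz\'ar's probabilistic framework and our unnormalised constraint $\int_I g\,dx=a_0\neq 1$: one must check that rescaling by the fixed positive constant $a_0$ only shifts the objective by a constant and hence preserves the minimiser, and that the normalised candidate still has an exponential Radon--Nikodym derivative in the span of $\{1,x\}$. Two routine points remain to be verified in passing. First, that every $\tilde g\in\mathcal{M}^+$ gives $P=\tfrac1{a_0}\tilde g\,dR\ll R$, so that every competitor is admissible in Csisz\'ar's conclusion; this is immediate since $m>0$ on $I$. Second, that $E_i(g)$ is finite, which holds because $g$ is a bounded exponential on the bounded buckets and, on the last bucket $I=[K_n,\infty[$, has $\beta_n<0$, so that $g\ln g\in L^1(I)$. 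With these checks the reduction to Csisz\'ar's theorem is complete, and the argument applies uniformly to bounded and unbounded buckets alike.
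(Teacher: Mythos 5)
Your proposal is correct and follows essentially the same route as the paper's own proof: reduce via Proposition \ref{Bijection} to relative entropy with respect to $dR=\theta e^{-x}dx$, rescale by $a_0$ to pass between $\tilde\Omega$ and a set of probabilities $\mathcal{E}$ (checking that this shifts the objective only by the constant $\ln a_0$), and apply Csisz\'ar's theorem to the candidate $\tfrac{dQ}{dR}=\tfrac{\alpha_i}{\theta a_0}e^{(\beta_i+1)x}$, which has the required exponential form. The only differences are cosmetic --- the paper encodes the normalisation by taking constraint functions $f_\gamma(x)=a_0x^\gamma$ for $\gamma\in\{0,1\}$ where you instead note the mass constraint is automatic for probabilities, and you use the correct sign $a_0=\tilde D_i-\tilde D_{i+1}$ where the paper's proof contains a sign typo --- plus your extra (harmless) verification that $E_i(g)$ is finite on the unbounded bucket via $\beta_n<0$.
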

\begin{proof}
Set $a_0 = \tilde D_{i+1} - \tilde D_i$ and $a_1 = (\tilde C_i + K_i\tilde D_i) - (\tilde C_{i+1} + K_{i+1}\tilde D_{i+1})$.
Let $m$, $R$, $\Omega$ and $\tilde\Omega$ be as in Proposition \ref{Bijection}.
Note that for this choice of $I$, $a_0$ and $a_1$, we have $\Omega = \mathcal{X}_i$ and $E_I = E_i$.

Let $X=I$, ${\cal H}$ be the $\sigma$-algebra of Lebesgue measurable subsets of $I$, $\Gamma = \{0,1\}$,
$f_\gamma(x)\equiv a_0x^{\gamma}$ ($\gamma\in\Gamma$) and $\cal E$ as in Csisz\'ar's theorem.

Given $\tilde h \in \tilde\Omega$, define the measure $P_{\tilde h}$ by $dP_{\tilde h}(x)=a_0^{-1}\tilde h(x)dR(x)$.
From the definition of $\tilde\Omega$, it follows that $P_{\tilde h}\in{\cal E}$.
Conversely, if $P\in{\cal E}$ and $P\ll R$, then $a_0 \cdot \partial P/ \partial R \in \tilde\Omega$.
Then a simple computation yields
\begin{eqnarray}
\label{Entropy on E}
-\int_I \frac{\partial P_{\tilde h}}{\partial R}(x) \ln \left( \frac{\partial P_{\tilde h}}{\partial R}(x) \right) dR(x)
&=&
-a_0^{-1} \int_I \tilde h(x) \ln \tilde h(x) dR(x) + \ln(a_0) \nonumber \\
&=&
a_0^{-1}E_I(\tilde h|R) + \ln(a_0).
\end{eqnarray}

By definition of $\alpha_i$ and $\beta_i$ we have $g \in \mathcal{X}_i = \Omega$.
Proposition \ref{Bijection} yields $\tilde g=g/m\in\tilde\Omega$.
Moreover, $\tilde g(x) = \alpha_i\theta^{-1} e^{(\beta_i + 1)x}$ for all $x\in I$.

Let $Q=P_{\tilde g}$. It follows that $Q\in{\cal E}$, and its Radon-Nikodym derivative with respect to $R$
(which is $a_0^{-1}\tilde g$) has the form \eqref{Maximiser} with
$c=a_0^{-1}\alpha_i\theta^{-1}$ and $q=(\beta_i+1)a_0^{-1}f_1$.
Therefore, Csisz\'ar's theorem gives
\begin{equation*}
\int \frac{\partial Q}{\partial R}(x)\ln\left(\frac{\partial Q}{\partial R}(x)\right)dR(x)
\le
\int \frac{\partial P}{\partial R}(x)\ln\left(\frac{\partial P}{\partial R}(x)\right)dR(x)
\end{equation*}
for all $P\in{\cal E}$ such that $P\ll R$.
In particular, for all $\tilde h \in \tilde\Omega$, from \eqref{Entropy on E} we obtain
\begin{equation*}
E_I({\tilde g}|R) \ge E_I({\tilde h}|R).
\end{equation*}

We conclude that $\tilde g$ maximises $E_I(\cdot|R)$ on $\tilde\Omega$ and, again by Proposition \ref{Bijection},
that $g$ is a maximiser of $E_i$ on $\mathcal{X}_i$.
\qed
\end{proof}

\subsection{Some Results Regarding the Entropy Maximiser}

We have the explicit form of the density given by equation \eqref{alphaBeta}.
This allows us to give formulas in several important cases. To do this, we
first state two useful results for the following proofs.

For $K\in[K_i,K_{i+1}[$, we have
\begin{eqnarray}
\int_{K_i}^K g(x)dx &=& \alpha_i\int_{K_i}^K e^{\beta_i x}dx = \frac{\alpha_i}{\beta_i}(e^{\beta_iK} - e^{\beta_iK_i}),
\label{K_i-K-mass} \\
\int_{K_i}^K xg(x)dx &=& \alpha_i\int_{K_i}^K xe^{\beta_i x}dx = \frac{d}{d\beta_i}\left[ \alpha_i\int_{K_i}^K e^{\beta_i x}dx \right] \nonumber \\
&=& \frac{\alpha_i}{\beta_i}(Ke^{\beta_iK} - K_ie^{\beta_iK_i}) - \frac{\alpha_i}{\beta_i^2}(e^{\beta_iK} - e^{\beta_iK_i}).
\label{K_i-K-mean}
\end{eqnarray}

It is straightforward to integrate the density $g$ and obtain an explicit form of the
probability distribution
$$
G(x) := \int_0^x g(s) ds.
$$
Its inverse can also be expressed analytically, which is a useful feature for Monte Carlo simulations.
The results are stated in the following proposition.

\begin{proposition}
\label{distributionFunction}
Suppose $K \in [K_i, K_{i+1}[$. Then
\begin{equation*}
G(K) = \left\{
\begin{array}{ll}
\displaystyle
1 - \tilde D_i + \frac{\alpha_i}{\beta_i} (e^{\beta_i K} - e^{\beta_i K_i}) \quad & \mathrm{if}\ \beta_i\ne 0, \\
\\
1 - \tilde D_i + \alpha_i(K - K_i) & \mathrm{if}\ \beta_i = 0. \\
\end{array}
\right.
\end{equation*}

Given $L\in[0,1[$, find $i\in\{0,\dots, n\}$ such that $1-L\in\ ]\tilde D_{i+1}, \tilde D_i]$. Then
\begin{equation*}
G^{-1}(L) = \left\{
\begin{array}{ll}
\displaystyle
\frac1{\beta_i}\ln\left(e^{\beta_iK_i} + \frac{\beta_i}{\alpha_i}(\tilde D_i-1+L)\right) \quad & \mathrm{if} \
\beta_i\ne 0, \\
\\
\displaystyle
K_i + \frac{\tilde D_i -1 + L}{\alpha_i} & \mathrm{if}\ \beta_i = 0.
\end{array}
\right.
\end{equation*}
\end{proposition}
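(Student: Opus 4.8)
The plan is to compute $G(K)$ by splitting the integral at the left endpoint $K_i$ of the bucket containing $K$, and then to invert the resulting explicit expression algebraically. The whole argument rests on two already-established facts: the closed-form bucket integral \eqref{K_i-K-mass} and the digital constraint \eqref{ME2}.

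First I would pin down the value of $G$ at the bucket boundaries. Since $\int_0^\infty g(x)\,dx = \tilde D_0 = 1$, constraint \eqref{ME2} gives
$$
G(K_i) = \int_0^{K_i} g(s)\,ds = 1 - \int_{K_i}^\infty g(s)\,ds = 1 - \tilde D_i.
$$
Then, for $K \in [K_i, K_{i+1}[$, I would write $G(K) = G(K_i) + \int_{K_i}^K g(s)\,ds$ and substitute the closed form \eqref{K_i-K-mass} for the second term, treating the degenerate case $\beta_i = 0$ separately (there the integral is simply $\alpha_i(K - K_i)$). This reproduces both cases of the stated formula for $G(K)$ at once.

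For the inverse, I would first observe that $g(x) = \alpha_i e^{\beta_i x} > 0$, with $\alpha_i > 0$ by \eqref{alpha2} together with $\tilde D_i > \tilde D_{i+1}$, so $G$ is strictly increasing and hence genuinely invertible. Given $L \in [0,1[$, the selection rule $1 - L \in\ ]\tilde D_{i+1}, \tilde D_i]$ is equivalent to $G(K_i) \le L < G(K_{i+1})$, which guarantees that the solution of $G(K) = L$ lies in $[K_i, K_{i+1}[$ and that the correct pair $(\alpha_i, \beta_i)$ is used. Setting $G(K) = L$ with the formula just derived and solving for $K$ — isolating $e^{\beta_i K}$ and taking logarithms when $\beta_i \neq 0$, or inverting the linear relation when $\beta_i = 0$ — yields the two cases of the claimed expression for $G^{-1}(L)$.

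The computation is entirely mechanical once $G(K_i) = 1 - \tilde D_i$ is in hand, so there is no serious obstacle; the only points needing care are bookkeeping. I would check that the selection rule places $K$ in the correct bucket, and that the argument of the logarithm is positive. The latter is automatic, since for the solution $K$ it equals $e^{\beta_i K}$; alternatively it follows from $\tilde D_i - 1 + L \in [0, \tilde D_i - \tilde D_{i+1}[$ together with the sign relation between $\alpha_i$ and $\beta_i$ coming from \eqref{alpha2}.
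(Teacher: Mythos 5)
Your proposal is correct and takes essentially the same route as the paper's proof: both compute $G(K_i)=1-\tilde D_i$ from the digital constraint, split the integral at $K_i$ and apply \eqref{K_i-K-mass}, and then invert by observing that the selection rule $1-L\in\ ]\tilde D_{i+1},\tilde D_i]$ places $G^{-1}(L)$ in $[K_i,K_{i+1}[$ before solving algebraically for $K$. Your additional checks (strict monotonicity of $G$ and positivity of the logarithm's argument) are slightly more explicit than the paper, which leaves them implicit, but they do not change the argument.
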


\begin{proof}
We treat only the case $\beta_i\ne 0$. The simpler case $\beta_i=0$ is left to the reader.

First, notice that $G(K_i)=1-\tilde D_i$. Then, using \eqref{K_i-K-mass}, we get
\begin{equation}
\label{pdf}
G(K) = \int_0^{K_i} g(x)dx + \int_{K_i}^K g(x)dx = 1 - \tilde D_i + \frac{\alpha_i}{\beta_i}( e^{\beta_i K} - e^{\beta_i K_i}).
\end{equation}

Since $L\in[1-\tilde D_i,1-\tilde D_{i+1}[\ =[G(K_i),G(K_{i+1})[$, we have $G^{-1}(L)\in[K_i, K_{i+1}[$. Therefore solving \eqref{pdf} for $K=G^{-1}(L)$ concludes the proof.
\qed
\end{proof}

It is also straightforward to express the prices of call and digital options analytically.

\begin{proposition}
\label{Prices}
Given a strike $K\in[0,\infty[$, find $i\in\{0,\dots,n\}$ such that $K\in[K_i,K_{i+1}[$. If $\beta_i\ne 0$, then
\begin{eqnarray*}
\tilde D(K) &=& \tilde D_i - \frac{\alpha_i}{\beta_i}(e^{\beta_iK} - e^{\beta_iK_i}), \\
\tilde C(K) &=& \tilde C_i - (K - K_i) \left(\tilde D_i + \frac{\alpha_i}{\beta_i} e^{\beta_iK_i}\right)+\frac{\alpha_i}{\beta_i^2}(e^{\beta_iK} - e^{\beta_iK_i}).
\end{eqnarray*}

If $\beta_i=0$, then
\begin{eqnarray*}
\tilde D(K) &=& \tilde D_i - \alpha_i(K - K_i), \\
\tilde C(K) &=& \tilde C_i - (K - K_i)\tilde D_i + \frac{\alpha_i}{2}(K - K_i)^2.
\end{eqnarray*}
\end{proposition}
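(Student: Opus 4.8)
The plan is to read off both prices from their defining integrals, $\tilde D(K) = \int_K^\infty g(x)\,dx$ and $\tilde C(K) = \int_K^\infty (x-K)g(x)\,dx$, by splitting each domain at the left endpoint $K_i$ of the bucket containing $K$ and then substituting the closed-form partial integrals \eqref{K_i-K-mass} and \eqref{K_i-K-mean} recorded above. The point is that the tail starting at $K_i$ is already pinned down by the global constraints: by \eqref{ME2} we have $\int_{K_i}^\infty g(x)\,dx = \tilde D_i$, and combining \eqref{ME1} with \eqref{ME2} through $\int_{K_i}^\infty x g(x)\,dx = \int_{K_i}^\infty (x-K_i)g(x)\,dx + K_i\int_{K_i}^\infty g(x)\,dx$ gives $\int_{K_i}^\infty x g(x)\,dx = \tilde C_i + K_i\tilde D_i$. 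These two identities furnish the constants from which the partial integrals over $[K_i,K]$ are subtracted.

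For the digital price I would write $\tilde D(K) = \int_{K_i}^\infty g(x)\,dx - \int_{K_i}^K g(x)\,dx = \tilde D_i - \int_{K_i}^K g(x)\,dx$ and insert \eqref{K_i-K-mass}, which reproduces the stated formula at once. For the call price I would start from $\tilde C(K) = \int_K^\infty x g(x)\,dx - K\,\tilde D(K)$, split the first integral at $K_i$, use $\int_{K_i}^\infty x g(x)\,dx = \tilde C_i + K_i\tilde D_i$ together with \eqref{K_i-K-mean}, and finally substitute the digital formula just obtained for $\tilde D(K)$.

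The only delicate step is the bookkeeping in the call case, and this is the part I expect to require care: expanding $-K\,\tilde D(K)$ produces a term $+\alpha_i K e^{\beta_i K}/\beta_i$ that exactly cancels the $-\alpha_i K e^{\beta_i K}/\beta_i$ term generated when \eqref{K_i-K-mean} is subtracted, so no $Ke^{\beta_i K}$ survives. Collecting what remains and factoring out $(K-K_i)$ yields the claimed expression. The case $\beta_i = 0$ runs along identical lines, with \eqref{K_i-K-mass} and \eqref{K_i-K-mean} replaced by the elementary integrals $\int_{K_i}^K \alpha_i\,dx = \alpha_i(K-K_i)$ and $\int_{K_i}^K \alpha_i x\,dx = \tfrac{\alpha_i}{2}(K^2 - K_i^2)$; the analogous simplification collapses to $\tfrac{\alpha_i}{2}(K-K_i)^2$, giving the second pair of formulas.
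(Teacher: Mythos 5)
Your proposal is correct and follows essentially the same route as the paper's proof: split the tail integrals at $K_i$, identify $\int_{K_i}^\infty g(x)\,dx = \tilde D_i$ and $\int_{K_i}^\infty x g(x)\,dx = \tilde C_i + K_i \tilde D_i$, and substitute the partial integrals \eqref{K_i-K-mass} and \eqref{K_i-K-mean}, with your identity $\tilde C(K) = \int_K^\infty x g(x)\,dx - K\tilde D(K)$ being exactly the paper's equation \eqref{C+KD} rearranged. The only difference is cosmetic: you spell out the $Ke^{\beta_i K}$ cancellation and the $\beta_i = 0$ case, which the paper leaves to the reader.
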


\begin{proof}
Again we prove only the case $\beta_i\ne 0$.
From \eqref{K_i-K-mass} we obtain
\begin{equation}
\label{Digital}
\tilde D(K) = \int_K^\infty g(x)dx = \int_{K_i}^\infty g(x)dx - \int_{K_i}^K g(x)dx = \tilde D_i - \frac{\alpha_i}{\beta_i}(e^{\beta_iK} - e^{\beta_iK_i}).
\end{equation}
For the (undiscounted) call price we have
\begin{eqnarray}
\tilde C(K) + K\tilde D(K) &=& \int_K^\infty xg(x)dx = \int_{K_i}^\infty xg(x)dx - \int_{K_i}^K xg(x)dx \nonumber \\
&=& \tilde C_i + K_i\tilde D_i - \int_{K_i}^K xg(x)dx.
\label{C+KD}
\end{eqnarray}

Now putting \eqref{K_i-K-mean} and \eqref{Digital} into \eqref{C+KD} leads to the stated result.
\qed
\end{proof}

Finally, using Euler's relationship for homogeneous functions, we can also give an explicit formula for spot-delta.

\begin{corollary}
Given a strike $K\in[0,\infty[$, find $i\in\{0,\dots,n\}$ such that $K\in[K_i,K_{i+1}[$. Let $S$ be today's underlying spot price and $\Delta$ be the spot-delta of a call with strike $K$ maturing at $T$. If $\beta_i\ne 0$, then

\begin{equation*}
\Delta = \frac{DF(0,T)}{S}\left(\tilde C_i + K_i\tilde D_i - \frac{\alpha_i}{\beta_i}(Ke^{\beta_iK} - K_ie^{\beta_iK_i}) + \frac{\alpha_i}{\beta_i^2}(e^{\beta_iK} - e^{\beta_iK_i}) \right).
\end{equation*}

If $\beta_i=0$, then
\begin{equation*}
\Delta = \frac{DF(0,T)}{S}\left(\tilde C_i + K_i\tilde D_i - \frac{\alpha_i}{2}(K^2-K_i^2)\right).
\end{equation*}
\end{corollary}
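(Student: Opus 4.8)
The plan is to exploit the scaling structure of the call price. Regarded as a function $C=C(S,K)$ of today's spot $S$ and the strike $K$ (with $T$ fixed), the undiscounted price is the risk-neutral expectation $\int_0^\infty (x-K)^+ g(x)\,dx$, and since the terminal value $S(T)$ is proportional to the spot, replacing $(S,K)$ by $(\lambda S,\lambda K)$ scales the payoff, and hence the price, by $\lambda$. Thus $C$ is positively homogeneous of degree one in $(S,K)$, and Euler's relation for such functions yields
$$
C = S\,\frac{\partial C}{\partial S} + K\,\frac{\partial C}{\partial K}.
$$
Since $\Delta = \partial C/\partial S$ by definition, I would solve this identity for $\Delta$, obtaining $\Delta = S^{-1}\bigl(C - K\,\partial C/\partial K\bigr)$.

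Next I would compute the strike-derivative. Writing $C = DF(0,T)\,\tilde C(K)$ with $\tilde C(K) = \int_K^\infty (x-K) g(x)\,dx$ and differentiating under the integral sign, the boundary term vanishes and one is left with $\partial_K \tilde C(K) = -\int_K^\infty g(x)\,dx = -\tilde D(K)$. Hence $\partial C/\partial K = -DF(0,T)\,\tilde D(K)$, and substituting into the expression for $\Delta$ gives the compact formula
$$
\Delta = \frac{DF(0,T)}{S}\bigl( \tilde C(K) + K\,\tilde D(K) \bigr).
$$

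It then only remains to insert the explicit bucket expressions. For $K\in[K_i,K_{i+1}[$ the quantity $\tilde C(K)+K\tilde D(K)$ has already been evaluated in the proof of Proposition \ref{Prices}: combining \eqref{C+KD} with the integral \eqref{K_i-K-mean} gives, for $\beta_i\ne 0$,
$$
\tilde C(K) + K\tilde D(K) = \tilde C_i + K_i\tilde D_i - \frac{\alpha_i}{\beta_i}(Ke^{\beta_iK} - K_ie^{\beta_iK_i}) + \frac{\alpha_i}{\beta_i^2}(e^{\beta_iK} - e^{\beta_iK_i}),
$$
while for $\beta_i=0$ the integral $\int_{K_i}^K x\alpha_i\,dx = \tfrac{\alpha_i}{2}(K^2-K_i^2)$ produces the corresponding simpler expression. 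Multiplying each by $DF(0,T)/S$ reproduces the two claimed formulas.

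I expect the only genuinely delicate point to be the justification of degree-one homogeneity: one must make explicit that the terminal price $S(T)$ scales linearly with the spot $S$ under the risk-neutral measure, so that the calibrated density transforms consistently and $C(\lambda S,\lambda K)=\lambda C(S,K)$ holds. Once this is granted, Euler's theorem applies verbatim, and the remainder is the routine differentiation and substitution sketched above.
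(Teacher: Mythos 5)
Your proposal is correct and follows essentially the same route as the paper: degree-one homogeneity of the call price in $(S,K)$ plus Euler's theorem, the identity $\partial C/\partial K = -DF(0,T)\,\tilde D(K)$ to get $S\Delta = DF(0,T)\bigl(\tilde C(K) + K\tilde D(K)\bigr)$, and then substitution of \eqref{C+KD} and \eqref{K_i-K-mean} (with the elementary integral for $\beta_i=0$). The only difference is that you spell out the scaling justification of homogeneity and the differentiation under the integral sign, which the paper simply asserts; this is a welcome tightening, not a different argument.
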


\begin{proof}
Again we consider only the case $\beta_i\ne 0$ and leave the simpler case $\beta_i=0$ for the reader.

Let $C$ and $D$ be, respectively, the discounted prices of call and digital options with strike $K$ maturing at $T$, i.e. $C = DF(0,T)\tilde C(K)$ and $D = DF(0,T)\tilde D(K)$, where $\tilde C$ and $\tilde D$ are as in Proposition \ref{Prices}.

Since $C$ is a positively homogeneous function of degree $1$ in $(K,S)$, from Euler's theorem we have
\begin{equation*}
C = S \frac{\partial C}{\partial S} + K \frac{\partial C}{\partial K}.
\end{equation*}

Recalling that $D=-\frac{\partial C}{\partial K}$, we can rewrite the last relation as
\begin{equation*}
S\Delta = C + KD = DF(0,T)(\tilde C(K) + K\tilde D(K)).
\end{equation*}
Now using \eqref{K_i-K-mean} and \eqref{C+KD} gives the result.
\qed
\end{proof}

Note that the analogous statement for the forward-delta can be obtained by replacing the spot-price
with the forward-price in the corollary and proof above.

\subsection{Maximum Relative Entropy Distribution with a Given Prior Distribution}

If we hold a prior belief about the distribution, we can maximise relative entropy instead
in order to stay as ``close'' as possible to the prior distribution.
Suppose $p(x)$ is a probability density for this prior distribution.
For ${\mathbb P}_h \ll {\mathbb P}_p$, define relative entropy
\begin{equation*}
E(h|p) := - \int_0^\infty h(x) \ln \left( \frac{h(x)}{p(x)} \right) dx.
\end{equation*}
(The Kullback-Leibler information number or $I$-divergence is given by $E_{KL}(h|p) = -E(h|p)$.
This can be thought of as a measure of distance between two distributions.
For example, $E_{KL}(h|p) \geq 0 \ \forall h$, and $E_{KL}(h|p) = 0$ if and only if $h = p$.)
We have
$$
- \int_0^\infty h(x) \ln \left( \frac{h(x)}{p(x)} \right) dx = - \int_0^\infty \frac{h(x)}{p(x)} \ln \left( \frac{h(x)}{p(x)} \right) p(x) dx,
$$
and essentially the same argument as the one given above shows that the Maximum Relative Entropy Density
(MRED) $h$ is given by
\begin{equation}
\label{relativeEntropyDensity}
\frac{h(x)}{p(x)} = \gamma_i e^{\delta_i x} =: g(x), \quad x \in [K_i, K_{i+1}[.
\end{equation}
Therefore the resulting density $h = g p$ is now given by the product of a piecewise exponential density
and the prior density.

Even in the simple case where the prior density $p$ is just log-normal, we no longer have explicit formulas for
call and digital prices. Since we cannot separate the two constraints
\begin{eqnarray*}
\label{bothConstraints}
\int_{K_i}^{K_{i+1}} h(x) dx &=& \tilde{D}_i - \tilde{D}_{i+1},
\\
\int_{K_i}^{K_{i+1}} x h(x) dx &=& (\tilde{C}_i + K_i \tilde{D}_i) - (\tilde{C}_{i+1} + K_{i+1} \tilde{D}_{i+1})
\end{eqnarray*}
for each $i=0,...,n,$ as in equations \eqref{alpha1} and \eqref{beta1},
we must solve them simultaneously using numerical integration and a two-dimensional root-finder.

However, if the prior density $p$ is already given by an MED, then
$$
h(x) = g(x) p(x) = \alpha_i \gamma_i e^{(\beta_i + \delta_i) x}, \quad x \in [K_i, K_{i+1}[,
$$
and we can solve everything analytically as before. We also recover explicit formulas for call and digital prices.

\section{The Maximum Entropy Distribution Using Calls}
\label{MEC}

\subsection{Maximum Entropy Distribution}

Buchen and Kelly \cite{BK} propose a similar method to find an entropy-maximising density $g_{BK}$ under constraints
given by European payoffs.
The case of most interest is where these are the payoff-functions of call options at different strikes $K_1, ..., K_m$
and the actual constraints are given by (undiscounted) call option prices $\tilde{C}_1, ..., \tilde{C}_m$
such that
$$
\ExpectMeas{g_{BK}}{ \left( S(T)-K_i \right)^+} = \tilde{C}_i
$$
must hold for all $i=1,...,m$.

The density $g_{BK}$ must therefore satisfy the conditions
\begin{equation}
\label{BK-ME1}
\int_0^\infty (x - K_i)^+ g_{BK}(x) dx = \tilde{C}_i \quad \forall i=1,...,m
\end{equation}
and
\begin{equation}
\label{BK-ME2}
\int_0^\infty g_{BK}(x) dx = 1.
\end{equation}

To find $g_{BK}$, they construct the functional
\begin{eqnarray*}
{\cal H}(g_{BK})
&:=& -\int_0^{\infty} g_{BK}(x) \, \ln g_{BK}(x) dx
\\
&+& (1 + \lambda_0) \int_0^{\infty} g_{BK}(x) dx
\\
&+& \sum_{i=1}^m \lambda_i \int_0^{\infty} (x - K_i)^+ g_{BK}(x) dx,
\end{eqnarray*}
where $\lambda_0, ..., \lambda_m$ are the Lagrange multipliers, and then solve the equation
\begin{equation*}
\delta{\cal H}
= \int_0^{\infty} \left( -\ln g_{BK}(x) + \lambda_0 + \sum_{i=1}^m \lambda_i (x - K_i)^+ \right) \delta g_{BK}(x) dx = 0.
\end{equation*}
The solution is given by
\begin{equation}
\label{BK-density}
g_{BK}(x) = \frac{1}{\mu} e^{\sum_{i=1}^m \lambda_i (x - K_i)^+} \quad \forall x \in [0, \infty[,
\end{equation}
where $\mu := e^{-\lambda_0} = \int_0^{\infty} e^{\sum_{i=1}^m \lambda_i (x - K_i)^+} dx$ is a normalising constant.

Buchen and Kelly show that numerically, finding the parameters $\lambda_1, ..., \lambda_m$ is an m-dimensional root-finding problem
that can be tackled with the multi-dimensional Newton algorithm.
They show how to compute the Jacobian, and that it is invertible, by expressing it as a covariance matrix.

If a call option with strike $K_1 = 0$, i.e. the forward, is among the input data, the mean of the distribution is given.
Since the total mass, $1$, is also known, we have the two main constraints needed to apply the arguments
from subsection \ref{RigorousProof} and can therefore also rigorously find the entropy maximiser when
only call options are given as input. Of course, the forward should be known in most situations, so that this
is certainly the most important case.

\subsection{Maximum Entropy Distribution with a Given Prior Distribution}

Similarly, if a prior distribution $p$ is given, the distribution maximising relative entropy under the same constraints
is given by
\begin{equation}
h_{BK}(x) = \frac{p(x)}{\mu} e^{\sum_{i=1}^m \lambda_i (x - K_i)^+} \quad \forall x \in [0, \infty[,
\end{equation}
where $\mu = \int_0^{\infty} p(x) e^{\sum_{i=1}^m \lambda_i (x - K_i)^+} dx$ is again the normalising constant.

\section{Comparing the Distributions}
\label{ComparingDistributions}

In this section we give some numerical examples for the entropy maximisers described so far.
We suppose that the market data is given by
$$
F = 100, \ r = 0, \ \sigma = 25\%, \ T = 1.
$$
We assume a flat volatility and make no skew correction when calculating the digital prices in this scenario.

\subsection{MED using Calls and Digitals}
\label{MED-CD}

We calculate three densities using strikes
\begin{itemize}
\item
$K_0 = 0, K_1 = 100$
\item
$K_0 = 0, K_1 = 60, K_2 = 100, K_3 = 140$
\item
$K_0 = 0, K_1 = 60, K_2 = 80, K_3 = 100, K_4 = 120, K_5 = 140$
\end{itemize}

\begin{table}[ht]
\scriptsize
\caption{Option Prices and Density Parameters}
\label{tab:1}
\begin{tabular}{lllllllllll}
\hline\noalign{\smallskip}
{\bf Market} \\
Strike & 0.00 &	20.00 & 40.00 & 60.00 & 80.00 &	100.00 & 120.00	& 140.00 & 160.00 &	180.00 \\
Call &	100.0000 & 80.0000 & 60.0005 & 40.1454 & 22.2656 & 9.9477 & 3.7059 & 1.2139 & 0.3659 & 0.1049 \\
Digital & 1.0000 & 1.0000 &	0.9998 & 0.9725 & 0.7786 & 0.4503 &	0.1965 & 0.0707	& 0.0225 & 0.0066 \\
\noalign{\smallskip}\hline\noalign{\smallskip}
{\bf MED for 1 strike} \\	
Entropy & 4.6714 \\
$\alpha$ & 1.3582E-04 & n/a & n/a & n/a & n/a & 1.8835 & n/a & n/a	& n/a & n/a \\
$\beta$ & 0.0539 & n/a & n/a & n/a & n/a & -0.0453 & n/a & n/a & n/a & n/a \\
Call & 100.0000 & 80.0402 & 60.2562 & 40.9886 & 23.2384 & 9.9477 & 4.0232 & 1.6271 & 0.6581 & 0.2661 \\
Implied Vol. & n/a & 62.13\% & 46.26\% & 36.17\% & 28.88\% & 25.00\% & 25.95\% & 27.04\% & 27.84\% & 28.41\% \\
Digital & 1.0000 & 0.9951 & 0.9808 & 0.9386 & 0.8146 & 0.4503 & 0.1821 & 0.0736 & 0.0298 & 0.0120 \\
\noalign{\smallskip}\hline\noalign{\smallskip}
{\bf MED for 3 strikes} \\
Entropy & 4.6143 \\
$\alpha$ & 6.0682E-08 & n/a & n/a & 0.0016 & n/a & 0.5397 & n/a & 14.2333 & n/a & n/a \\
$\beta$ & 0.1894 &	n/a & n/a &	0.0255 & n/a & -0.0343 & n/a & -0.0582 & n/a & n/a \\
Call & 100.0000 & 80.0001 & 60.0033 & 40.1454 & 22.4905 & 9.9477 &	3.7539 & 1.2139 & 0.3790 & 0.1183 \\	
Implied Vol. &	n/a	& 38.76\% & 28.60\% &	25.00\% & 25.93\% & 25.00\% & 25.14\% &	25.00\% & 25.15\% & 25.38\% \\	
Digital & 1.0000 & 1.0000 & 0.9994 &	0.9725 & 0.7765 & 0.4503 & 0.1978 &	0.0707 & 0.0221	& 0.0069 \\	
\noalign{\smallskip}\hline\noalign{\smallskip}
{\bf MED for 5 strikes} \\
Entropy & 4.6076 \\
$\alpha$ & 6.0682E-08 & n/a & n/a & 1.5393E-04 & 0.0129 & 0.2389 & 1.6987 & 14.2333 & n/a & n/a \\
$\beta$ & 0.1894 & n/a & n/a & 0.0584 & 0.0027 & -0.0268 & -0.0433 & -0.0582 & n/a & n/a \\
Call & 100.0000	& 80.0001 &	60.0033 & 40.1454 &	22.2656 & 9.9477 & 3.7059 & 1.2139 & 0.3790 & 0.1183 \\
Implied Vol. & n/a	& 38.76\% & 28.60\% & 25.00\% & 25.00\% & 25.00\% & 25.00\% & 25.00\% & 25.15\% & 25.38\% \\
Digital & 1.0000 & 1.0000 & 0.9994 & 0.9725 & 0.7765 & 0.4503 &	0.1978 & 0.0707	& 0.0221 & 0.0069 \\
\noalign{\smallskip}\hline
\end{tabular}
\end{table}
Table \ref{tab:1} gives the (undiscounted) option prices we used and the parameters describing the density.

\begin{figure}[ht]
\includegraphics[width=\textwidth]{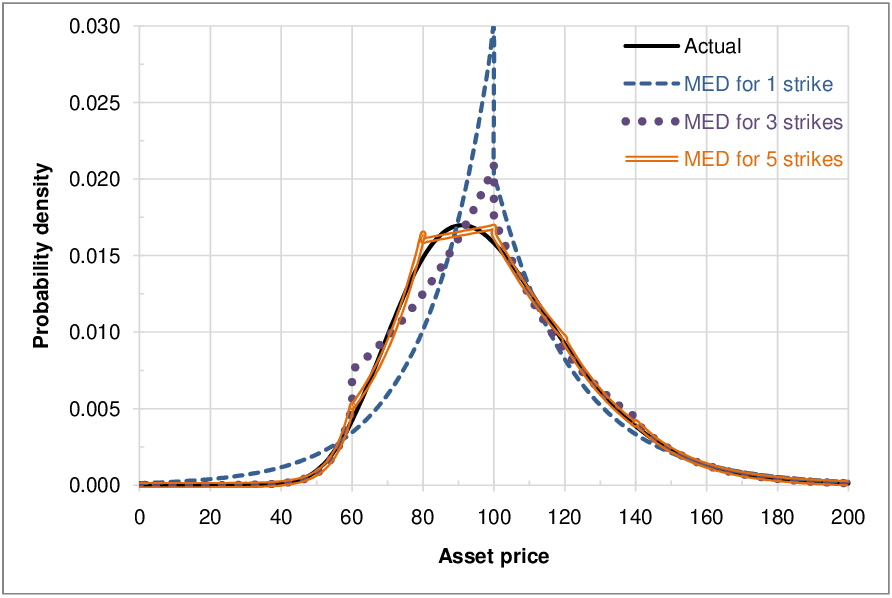}
\caption{Graphs of the actual log-normal density with $\sigma=25\%$ and three maximum entropy densities obtained by calibrating to $1$, $3$ and $5$ strikes.}
\label{fig:2}
\end{figure}
Figure \ref{fig:2} shows the three densities and the actual log-normal density. It can be seen that already with $5$ strikes and the forward,
the fit of the piecewise-exponential distribution to the log-normal distribution is very good.

In practice, however, implied volatilities are not flat as in the example above, i.e. they are not the same for different strikes and maturities.
This is discussed in detail in Gatheral's book \cite{G}, in particular in the section ``The SPX Implied Volatility Surface'' in chapter 3.
We show that our method has by its nature a tendency to give good fits to observed volatility surfaces. To do this, we will assume that we now only
have at-the-money (ATM) option prices, and that already with this minimal amount of market data our method generates a very realistic looking volatility surface.

We show in Figure \ref{fig:3} the implied volatility surface obtained by using just the ATM strike.
More precisely, if we assume $r = 0$ and $F = 100$ again, then the ATM strike is $K_{\rm ATM} = 100$.
Moreover, we consider a constant ATM volatility $\sigma_{\rm ATM} = 25\%$ (it could of course be time dependent).
For each maturity $T > 0$, we compute the Black-Scholes prices of the ATM call $C_1$ and ATM digital $D_1$.
Then applying the maximum entropy approach to just one strike $K_1 = K_{\rm ATM}$, we compute $\alpha_i$ and $\beta_i$ for $i = 0, 1$.
In other terms, we recover the maximum entropy density of $S(T)$ compatible with $K_1$, $C_1$ and $D_1$.
We emphasise that no other strike, call or digital is used in this calibration.

According to the MED, the price of a call with strike $K$ and maturity $T$ is given by $C_{\rm MED}(K, T) = \tilde C(K)$,
where $\tilde C(K)$ is given in Proposition \ref{Prices}.
From $C_{\rm MED}(K,T)$ we recover the implied volatility $\sigma(K, T)$ with a bisection root-finder from the Black-Scholes formula.
Readers interested in a more robust method can consult the one proposed in \cite{Jae}.

As expected, as a consequence of calibration, $\sigma(K_{\rm ATM}, T) = \sigma_{\rm ATM}$.
What is surprising is the fact that the curve $K \mapsto \sigma(K, T)$ has a profile very similar to smile curves typically seen in equity markets.

Now, by varying $T$ one constructs a volatility surface which, again, is qualitatively very similar to those observed in equity markets.

\begin{figure}[ht]
\includegraphics[width=\textwidth]{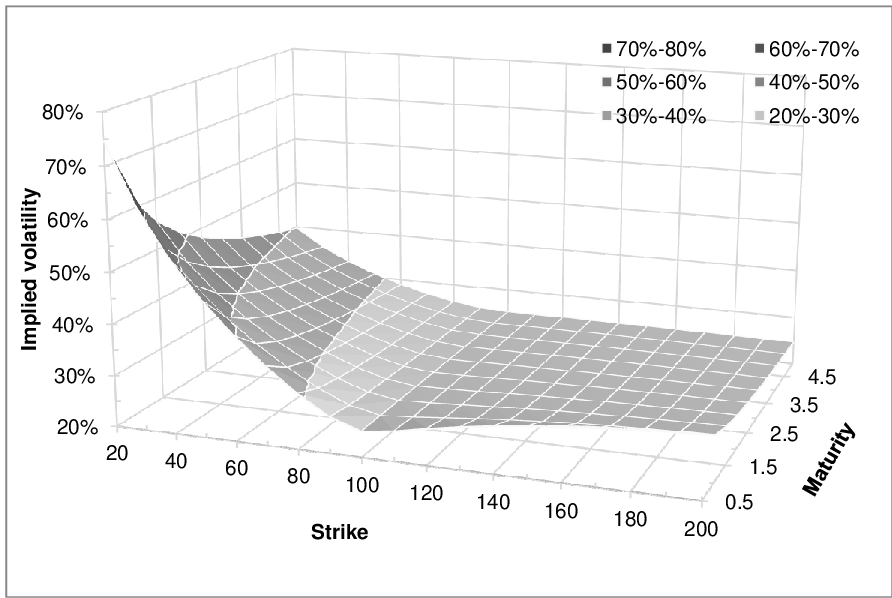}
\caption{Implied volatility surface obtained by calibrating only to the (constant) at-the-money volatility curve.}
\label{fig:3}
\end{figure}

The maximum entropy method seems to be able to transform just one volatility number from a flat Black-Scholes
(ATM) world into a very realistic looking volatility surface, with important features such as a strongly pronounced smile at the short end
that decays as the maturity increases.

Different strikes and different numbers of option prices can of course be used at different maturities, so that any arbitrage-free
option data can easily be converted into an implied volatility surface.

The density $g$ is usually discontinuous at the $K_i$'s. The distribution function is of course continuous.
Many Monte Carlo models work by drawing a random uniform variable and inverting the distribution.
In Black-Scholes type models, for example, a normal distribution has to be inverted at some stage.
In our case, only one logarithm needs to be taken, a circumstance which accelerates a simulation.

\subsection{MED using Calls and Digitals with Prior Log-Normal Distribution}
\label{RelMED-CD}

Let the prior distribution be a log-normal distribution with fixed volatility parameter $\sigma$
$$
p(x) = \frac{1}{x} \frac{1}{\sqrt{2 \pi \sigma^2 T}} e^{-\frac{\left( \ln(x/F) + \sigma^2 T/2 \right)^2}{2 \sigma^2 T}} \quad \forall x \in [0, \infty[.
$$
We still have an explicit form of the density, namely $h = g p$, where $g$ is a piecewise exponential density,
although the parameters $\gamma_i, \delta_i$ are of course different from the parameters $\alpha_i, \beta_i$ used for the MED
of the previous subsection. Since we are now unable to express call prices analytically, we calculate them via numerical integration.

\begin{table}[ht]
\scriptsize
\caption{Maximum Relative Entropy Density Parameters}
\label{tab:2}
\begin{tabular}{lllllll}
\hline\noalign{\smallskip}
Strike & 0.00 &	60.00 &	80.00 &	100.00 & 120.00 & 140.00 \\
\noalign{\smallskip}\hline\noalign{\smallskip}				
{\bf MRED for 1 strike} \\
$\gamma$ & 12.2600 & n/a & n/a & 0.0833 & n/a & n/a \\
$\delta$ & -0.0298 & n/a & n/a & 0.0206 & n/a & n/a \\
\noalign{\smallskip}\hline\noalign{\smallskip}			
{\bf MRED for 3 strikes} \\
$\gamma$ & 11.2900 & 7.2379 & n/a & 0.2930 & n/a & 0.3267 \\
$\delta$ & -0.0194 & -0.0237 & n/a & 0.0098 & n/a & 0.0116 \\
\noalign{\smallskip}\hline\noalign{\smallskip}							
{\bf MRED for 5 strikes} \\
$\gamma$ & 11.2900 & 5.9910 & 2.0430 & 0.5970 & 0.5815 & 0.3267 \\
$\delta$ & -0.0194 & -0.0210 & -0.0097 & 0.0031 & 0.0047 & 0.0116 \\
\noalign{\smallskip}\hline
\end{tabular}
\end{table}
Table \ref{tab:2} gives the parameters describing the density. Of course, should a prior density already meet the constraints,
we will have $\gamma_i = 1$ and $\delta_i =0$ for all $i$.

\begin{figure}[ht]
\includegraphics[width=\textwidth]{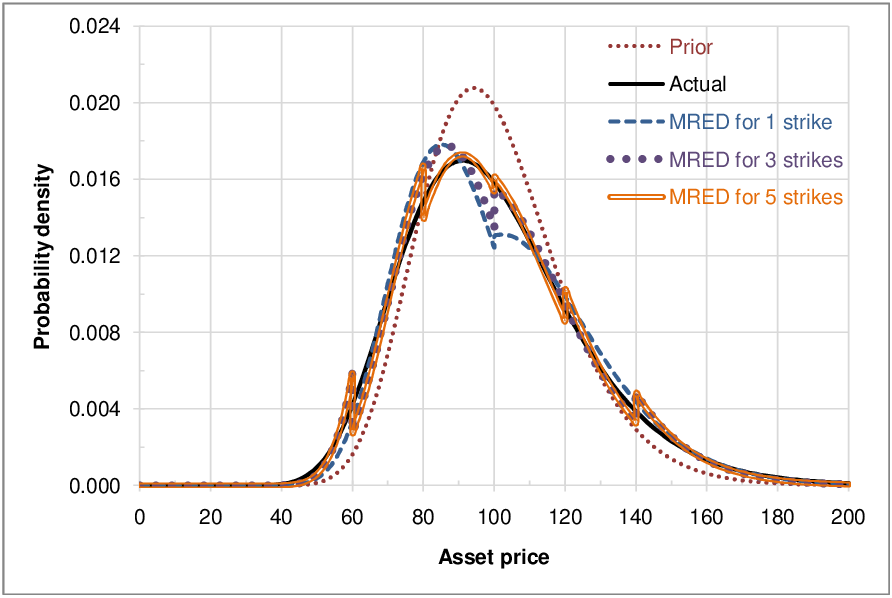}
\caption{Graphs of the prior log-normal density with $\sigma=20\%$, actual log-normal density with $\sigma=25\%$ and three maximum relative entropy densities obtained by calibrating to $1$, $3$ and $5$ strikes.}
\label{fig:4}
\end{figure}
Figure \ref{fig:4} shows the three maximum relative entropy densities and the prior log-normal density ($F = 100, \sigma = 20\%$).
The density for the forward and one call is already much closer to the actual one than in the previous case,
so that convergence is not as pronounced as before when the number of strikes is increased.
We see that $g$ has the effect of pushing the prior density downwards and widening it as to be closer to the actual density.

\subsection{MED using Calls}
\label{MED-C}

The explicit form of the density given by equation \eqref{BK-density} allows one to obtain
analytic expressions for call and digital prices like those in Proposition \ref{Prices}.
As an example, using just the forward and an at-the-money call, i.e. $K_1 = 0, K_2 = 100$, we obtained
$$
\lambda_1 = 0.048747,\ \lambda_2 = -0.098626,\ \mu = 5290.62
$$
on our computer. This leads to a very similar volatility smile as the one given at $T = 1$ in subsection \ref{MED-CD} above.
We refer to \cite{BK} for graphs and numerical data regarding this distribution.

\subsection{MED using Calls with Prior Log-Normal Distribution}
\label{RelMED-C}

As in subsection \ref{RelMED-CD}, in general there will be no analytic expressions for call or digital prices.
If the chosen prior distribution is continuous, then the resulting relative entropy maximiser will
also be continuous.
Again, we advise the reader to look at \cite{BK} for graphs and numerical data regarding this distribution.

\section{Calibrating to CBOE Option Data for the SPX}
\label{SPX-Calibration}

Digital options are traded on the Chicago Board Option Exchange (CBOE). They are called {\it binary options} there.
We quote the following paragraph from the ``Binaries'' product description \cite{CBOE}:

``CBOE offers Binary Options on the S\&P 500 Index (SPX) and the CBOE Volatility Index (VIX).
The ticker symbols for these Binary contracts is BSZ and BVZ respectively.
Expiration dates and settlement values are the same as for traditional options.''

The specification that digital option strikes and maturities are the same as those of call options is exactly what we need for our setup.
We calibrate to CBOE option prices from 10 April 2010 for two different maturities.

\subsection{Call and Digital Option Data for 18 September 2010}
\label{SPX-18Sep2010}

The first maturity is 18 September 2010.
We have digital option bid and ask quotes for strikes $K$ from $950$ to $1400$ USD, usually in steps of $25$ USD.
We also have call option bid and ask quotes for these same strikes.
We calibrate to the ``mid'' prices, i.e. the average of the bid and ask quotes, at the ten strikes from $950$ to $1400$ in steps of $50$ USD.

\begin{table}[ht]
\scriptsize
\caption{Calibrating to CBOE Quotes for Call and Digital Options on the SPX}
\label{tab:3}
\begin{tabular}{rrrrrrr}
\hline\noalign{\smallskip}
        & \multicolumn{2}{c}{\bf Market}   & \multicolumn{2}{c}{\bf MED using Calls}	& \multicolumn{2}{c}{\bf MED using Calls \& Digitals}	\\
Strike & Digital & Call  & Digital & Call    & Digital & Call \\
\noalign{\smallskip}\hline\noalign{\smallskip}
950	 &  0.9400 &	246.30 &	0.9259 &	246.30 &	0.9400 &	246.30 \\
975	 &  0.9150 &	223.20 &	0.9171 &	223.25 &	0.9153 &	223.12 \\
1000 &	0.8950 &	200.50 &	0.9014 &	200.50 &	0.8950 &	200.50 \\
1025 &	0.8750 &	178.15 &	0.8787 &	178.24 &	0.8795 &	178.30 \\
1050 &	0.8550 &	156.60 &	0.8516 &	156.60 &	0.8550 &	156.60 \\
1075 &	0.8150 &	135.70 &	0.8191 &	135.70 &	0.8195 &	135.65 \\
1100 &	0.7750 &	115.70 &	0.7802 &	115.70 &	0.7750 &	115.70 \\
1125 &	0.7250 &	 96.75 &	0.7336 &	 96.76 &	0.7367 &	 96.76 \\
1150 &	0.6700 &	 79.10 &	0.6776 &	 79.10 &	0.6700 &	 79.10 \\
1175 &	0.6050 &	 63.00 & 	0.6117 &	 62.97 &	0.6137 &	 63.01 \\
1200 &	0.5350 &	 48.60 &	0.5357 &	 48.60 &	0.5350 &	 48.60 \\
1225 &	0.4550 &	 36.25 &	0.4541 &	 36.23 &	0.4585 &	 36.13 \\
1250 &	0.3550 &	 25.90 &	0.3720 &	 25.90 &	0.3550 &	 25.90 \\
1300 &	0.1850 &	 11.35 &	0.2112 &	 11.35 &	0.1850 &	 11.35 \\
1350 &	0.0700 &	  4.10 &	0.0896 &	  4.10 &	0.0700 &	  4.10 \\
1400 &	0.0450 &	  1.33 &	0.0307 &	  1.33 &	0.0450 &	  1.33 \\
\noalign{\smallskip}\hline
\end{tabular}
\end{table}
Table \ref{tab:3} shows CBOE prices for call and digital options on the SPX from 10 April 2010 in columns 2 and 3.
Columns 4 and 5 show option prices obtained by calibrating an MED to call prices at strikes $950$, $1000$, $...$, $1350$, $1400$.
Columns 6 and 7 show option prices obtained by calibrating an MED to call and digital prices at the same strikes.
Note that the second MED matches market call {\it and} digital prices at the strikes calibrated to exactly, whereas the
first MED matches only the call prices.

\subsection{Call Option Data for 31 December 2010}
\label{SPX-31Dec2010}

The second maturity is 31 December 2010.
We have call option bid and ask quotes for strikes $K$ from $500$ to $1600$ USD in steps of $50$ USD.
We do not have any digital option quotes for this maturity.
As a substitute, we calculate symmetric call spread prices
\begin{equation}
\label{CallSpread}
\tilde{D}_i = -\frac{\tilde{C}_{i+1} - \tilde{C}_{i-1}}{K_{i+1} - K_{i-1}}
\end{equation}
using mid call prices.

We calibrate to this data at the three strikes $700,\ 1200,\ 1400$. In the previous example, we showed that our method can be used
to calibrate to quotes at many ($10$) strikes, and that this leads to a very good fit.
In this example, we calibrate to only a small number ($3$) of strikes in order to show that our method has an excellent
natural tendency to fit a market smile.
The call spread prices $\tilde{D}_i$ needed for \eqref{CallSpread} are obtained by using the call quotes at $K_i \pm 50$.

\begin{table}[ht]
\scriptsize
\label{tab:4}
\caption{Calibrating to CBOE Quotes for Call Options on the SPX}
\begin{tabular}{rrrrrrrrr}
\hline\noalign{\smallskip}
        & \multicolumn{2}{c}{\bf Market} & \multicolumn{2}{c}{\bf MED using Calls \& Digitals} & \multicolumn{2}{c}{\bf MED using Calls}	& \multicolumn{2}{c}{\bf MED using Calls}	\\
        & \multicolumn{2}{c}{ }          & \multicolumn{2}{c}{\bf for 3 strikes}                & \multicolumn{2}{c}{\bf for 3 strikes}    & \multicolumn{2}{c}{\bf for 9 strikes}	\\
Strike & Call & Implied Vol. & Call & Implied Vol. & Call & Implied Vol. & Call & Implied Vol. \\
\noalign{\smallskip}\hline\noalign{\smallskip}
500   & 681.15 & 62.90\% & 678.87 & 59.45\% & 677.05 & 56.22\% & 679.66 & 60.70\% \\
550   & 631.75 & 57.19\% & 630.27 & 55.25\% & 628.62 & 52.84\% & 630.92 & 56.12\% \\
600   & 582.35 & 51.94\% & 581.72 & 51.22\% & 580.41 & 49.63\% & 582.18 & 51.75\% \\
650   & 533.45 & 47.55\% & 533.22 & 47.32\% & 532.44 & 46.53\% & 533.45 & 47.55\% \\
700   & 484.75 & 43.52\% & 484.75 & 43.52\% & 484.75 & 43.52\% & 484.75 & 43.52\% \\
\hline
750   & 436.55 & 39.99\% & 436.54 & 39.98\% & 437.36 & 40.59\% & 436.55 & 39.99\% \\
800   & 388.85 & 36.77\% & 388.90 & 36.81\% & 390.39 & 37.75\% & 388.97 & 36.85\% \\
850   & 341.85 & 33.85\% & 342.02 & 33.95\% & 343.97 & 35.01\% & 342.11 & 34.00\% \\
900   & 295.95 & 31.28\% & 296.11 & 31.36\% & 298.32 & 32.39\% & 296.20 & 31.40\% \\
950   & 251.25 & 28.89\% & 251.49 & 28.99\% & 253.72 & 29.88\% & 251.55 & 29.01\% \\
1000  & 208.25 & 26.70\% & 208.54 & 26.80\% & 210.56 & 27.50\% & 208.54 & 26.80\% \\
1050  & 167.50 & 24.68\% & 167.76 & 24.76\% & 169.36 & 25.25\% & 167.72 & 24.75\% \\
1100  & 129.70 & 22.82\% & 129.84 & 22.86\% & 130.87 & 23.15\% & 129.77 & 22.84\% \\
1150  & 95.60 & 21.09\% & 95.64 & 21.10\% & 96.06 & 21.21\% & 95.60 & 21.09\% \\
1200  & 66.30 & 19.52\% & 66.30 & 19.52\% & 66.30 & 19.52\% & 66.30 & 19.52\% \\
\hline
1250  & 42.70 & 18.13\% & 43.09 & 18.24\% & 42.93 & 18.20\% & 42.70 & 18.13\% \\
1300  & 25.10 & 16.91\% & 25.83 & 17.13\% & 25.61 & 17.06\% & 25.19 & 16.93\% \\
1350  & 13.35 & 15.87\% & 13.79 & 16.05\% & 13.63 & 15.99\% & 13.35 & 15.87\% \\
1400  & 6.35  & 15.01\% & 6.35  & 15.01\% & 6.35  & 15.01\% & 6.35  & 15.01\% \\
\hline
1450  & 2.68  & 14.27\% & 2.74  & 14.34\% & 2.83  & 14.43\% & 2.68  & 14.27\% \\
1500  & 1.13  & 13.91\% & 1.18  & 14.02\% & 1.26  & 14.16\% & 1.08  & 13.82\% \\
1550  & 0.43  & 13.57\% & 0.51  & 13.88\% & 0.56  & 14.05\% & 0.43  & 13.59\% \\
1600  & 0.20  & 13.69\% & 0.22  & 13.83\% & 0.25  & 14.03\% & 0.17  & 13.49\% \\
\noalign{\smallskip}\hline
\end{tabular}
\end{table}
Table \ref{tab:4} shows CBOE prices for call options on the SPX from 10 April 2010 and their implied volatilities in columns 2 and 3.
Columns 4 and 5 show call option prices and implied volatilities obtained by calibrating an MED to call prices at strikes $700$, $1200$, $1400$.
Columns 6 and 7 show call option prices and implied volatilities obtained by calibrating an MED to call and digital prices at the same strikes, using call spread prices at $700\pm50$, $1200\pm50$, $1400\pm50$ as substitutes for the digital prices.
Columns 8 and 9 show call option prices and implied volatilities obtained by calibrating an MED to call and digital prices at all nine strikes.

Of course the MED obtained using calls and digitals at three strikes uses more ``information'' than the MED obtained using just calls at three strikes.
It is therefore not surprising that it leads to a better fit.
However, to show that this fit is already very good, we also report the MED obtained from call prices at all nine strikes used in this example.
Figure \ref{fig:5} illustrates graphically that these last two MED's are indeed very close to each other.

\begin{figure}[ht]
\includegraphics[width=\textwidth]{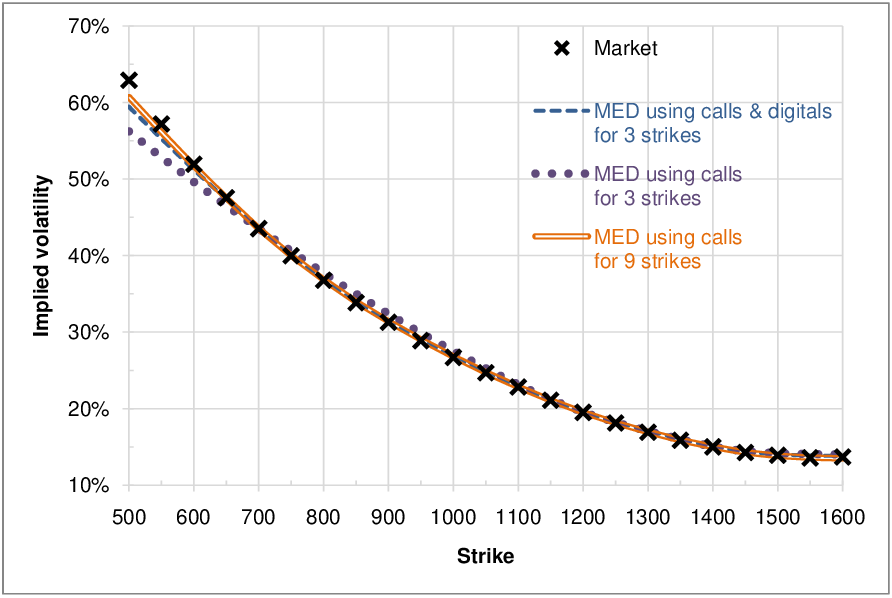}
\caption{Graphs of implied volatilities over strikes from Table \ref{tab:4}.}
\label{fig:5}
\end{figure}

\section{Some Remarks on Other Implied Distributions}

In most situations the information observed in the market regarding an asset consists of option prices
at a discrete set of strikes. Using this to extrapolate the second derivative of a function everywhere,
as suggested by Breeden and Litzenberger \cite{BL} or the volatility approach
\cite{DK}, \cite{D}, relies on additional assumptions about the distribution
of returns, the SDE the asset follows and/or the choice of an interpolation
method.
Even when there are strong reasons for such assumptions, we believe that it is important to know the shape
of the distribution function given by the Principle of Maximum Entropy (PME) in case these assumptions turn
out to be flawed.

In the local volatility model call prices for all strikes in $[0, \infty[$ are needed,
and, additionally, it assumes that the smile volatility is twice continuously differentiable.
Hence this approach requires an infinity of non-quoted prices together with a strong regularity.
``Since the market provides call prices at only a small number of strike prices, the second derivative
must be estimated by interpolation. This method is not very robust as the results are very sensitive
to the interpolation scheme used.'' \cite{BK}

For the method proposed here, if there are no observable digital quotes in the market,
the ``artificial'' data required consists only of digital prices for a finite, usually small, set of strikes ${\cal K} = \{K_1, ..., K_N\}$.
If one assumes (and our approach does not) that the volatility smile is differentiable with respect to the strike at the points of $\cal{K}$,
then prescribing digital prices there is indeed equivalent to prescribing the value of the smile derivative at those points.
This is still a much weaker requirement than that of the local volatility model.
Moreover, the example in Section \ref{SPX-Calibration} shows that centered call spread prices are very good estimators for digital prices.

\section{Conclusion}
\label{Conclusion}

Entropy has been one of the main concepts in information theory
\cite{S}, and since market participants react
to information when taking their positions, we believe entropy is a very natural tool for use in finance.

Entropy can be seen as a measure of how unbiased a probability distribution is.
Hence, by maximising entropy, what we propose is to find the most unbiased
probability distribution which agrees with information provided from the market.
We then show how this hypothesis leads to a piecewise exponential density.

The method we propose can be used reliably and efficiently in practice.
On the one hand, we have seen that it produces a remarkably realistic volatility surface from just one
volatility number as in the original Black-Scholes model, with a steep skew for short maturities that decays with increasing maturity.
On the other hand, if the actual distribution is known, then with option prices given at five or more strikes,
the fit to it is very close. In particular, it can be used as a robust interpolation method for volatility curves.

If additionally there is knowledge of a prior distribution, the Principle of Maximum Relative Entropy
can be applied to find a density that takes this into account and also meets the new constraints.
We give an example of such a scenario with two log-normal distributions, and show that the convergence to the actual
distribution is particularly quick.

Buchen and Kelly have proposed a similar method of finding a probability density that maximises entropy when
the market data consists only of call options. The density they obtain is continuous. However, to find its parameters they
must solve a multi-dimensional root-finding problem with the Newton-Raphson algorithm.

One criticism often raised in this application of the PME is that the method of finding the form of the density uses
Lagrange multipliers and is not rigorous. Indeed, this technique works well in practice and leads to the correct form,
but we also give a complete mathematical proof that avoids them.
Relative entropy has often been compared to a metric for probability distributions.
Our proof uses results by Csisz\'ar that give additional insights into ``distances'' between distributions
and establish remarkable ``geometric'' results.

Since we have an explicit form of the density, we are able to give analytical formulas
for the distribution, inverse distribution, and call and digital option prices.
Using Euler's relation for homogeneous functions, we give formulas for spot- and forward-deltas.

We also include two examples in which we calibrate to real market data from the CBOE.
We show that our method performs very well in both cases and compare our results to those obtained by using Buchen and Kelly's approach.

Cassio Neri currently works as quantitative analyst for Lloyds Banking Group in London. 
Lorenz Schneider is Assistant Professor in Quantitative Finance at EMLYON Business School in Lyon.
They have Ph.D.s in Mathematics from Universities Paris IX and VI, respectively.

\end{document}